	\tikzset{external/export=false}
\newcommand\titulo{Projective Networks: Topologies for Large Parallel Computer Systems}
\newcommand\autor{\href{http://www.alumnos.unican.es/ccc66}{Cristóbal Camarero Coterillo}}
\newcommand{\R}{\mathbb{R}}
\newtheorem{theorem}{Theorem}[section]
\newtheorem{lemma}[theorem]{Lemma}
\newtheorem{definition}[theorem]{Definition} %\newdef in JACM
\newtheorem{remark}[theorem]{Remark} %\newdef in JACM
\newtheorem{notation}[theorem]{Notation}
\newtheorem{example}[theorem]{Example}
\newcommand{\firstuse}[1]{\textsl{#1}}%good if italics have some clear different meaning
\newcommand{\latin}[1]{\textit{#1}}
\begin{document}
\title{\titulo}
\author{Cristóbal Camarero, Carmen Martínez, Enrique Vallejo, and Ramón Beivide\thanks{C. Camarero, C. Mart\'{\i}nez, E. Vallejo, and R. Beivide are with the Department of Computer Science and Electronics, Universidad de Cantabria, UNICAN, Spain.
email: cristobal.camarero@unican.es; carmen.martinez@unican.es; enrique.vallejo@unican.es; ramon.beivide@unican.es}
}
%\author{\autor}
%\date{}
\maketitle%

\begin{abstract}
The interconnection network comprises a significant portion of the cost of large parallel computers, both in economic terms and power consumption.
Several previous proposals exploit large-radix routers to build scalable low-distance topologies with the aim of minimizing these costs. However, they fail to consider potential unbalance in the network utilization, which in some cases results in suboptimal designs.
Based on an appropriate cost model, this paper advocates the use of networks based on incidence graphs of projective planes, broadly denoted as Projective Networks. Projective Networks rely on highly symmetric generalized Moore graphs and encompass several proposed direct (PN and demi-PN) and indirect (OFT) topologies under a common mathematical framework.
Compared to other proposals with average distance between 2 and 3 hops, these networks provide very high scalability while preserving a balanced network utilization, resulting in low network costs.
Overall, Projective Networks constitute a competitive alternative for exascale-level interconnection network design.
\end{abstract}

%\begin{IEEEkeywords}
%High-Performance Computing Interconnection Networks, generalized Moore graphs, projective networks.
%\end{IEEEkeywords}

\section{Introduction}\label{sec:intro}

One current trend in research for the design of Exascale systems is to greatly increase the number of compute nodes. The cost and power of the network of these large systems is significant, which urges to optimize these parameters. Specifically, the problem is how to interconnect a collection of compute nodes using a given router model with as small cost and power consumption as possible. If the interconnection network is modelled by a graph, where nodes represent the routers and edges the links connecting them, the Moore bound can be very useful. The present paper deals with graphs attaining or approaching the generalized Moore bound~\cite{Sampels} while minimizing cost and power consumption.

Graph theory has dealt with very interesting topologies that have not yet been adopted as interconnection networks. One paradigmatic example are Moore graphs~\cite{Miller}. Hoffman and Singleton provided in~\cite{Hoffman} some few examples of regular graphs of degree $\Delta$ and diameter $k$ having the maximum number of vertices; namely for $k=2$ and $\Delta=2, 3, 7$ and for $k=3$ and $\Delta=2$. They denoted such graphs as \textsl{Moore graphs} as they attain the upper bound for their number of nodes, solving for these cases, the ($\Delta$-$k$)-problem posed by E. F. Moore. Such graphs are optimal for interconnection networks as they simultaneously minimize maximum and average transmission delays among nodes.

In these interconnection networks, traffic is frequently uniform; when it is not, it can be randomized (using Valiant routing,~\cite{Valiant_ACM}). Under uniform traffic, maximum throughput depends on the network average distance $\bar k$, rather than the diameter $k$. This
%It is perfectly arguable that, for an interconnection network, it is more interesting to minimize average delay than the worst-case delay. If traffic is benign is because it is uniform. But if traffic is adverse, it can be randomized (using Valiant routing,~\cite{Valiant_ACM}) providing half of the maximum network performance.
% Hence, as Moore graphs are really rare, generalized Moore graphs \cite{Sampels} relaxes the problem, looking for graphs with minimum average distance for a given degree.
%Precisely, the interest over the average delay over the worst-case
promotes the search of generalized Moore graphs~\cite{Sampels}, which have minimum average distance for a given degree.
This is attained when, from a given node, there are the maximum amount of reachable nodes at any distance lower than the diameter, with the remaining nodes at distance  $k$.

As it will be shown in this paper, Moore and some generalized Moore graphs also minimize cost. If it is assumed that network cost is dominated by the number of employed ports (especially SerDes, as it will shown next), minimizing graph average distance not only maximizes throughput but it can also minimize investment and exploitation expenses. Nevertheless, it is important to highlight that highly symmetric graphs are always preferable as they do not exhibit bottlenecks that can compromise performance under uniform traffic. This paper shows examples of such topologies based on incidence graphs of projective planes and compares them with competitive alternatives. Incidence graphs of finite projective planes~\cite{Brown},~\cite{Erdos} have been used to attain the Moore bound, but not only mathematicians have paid attention to this discrete structures. In fact, Valerio \latin{et al.} already use them to define Orthogonal Fat Trees (OFT)~\cite{Valerio}, which are highly scalable cost optimal indirect networks. Brahme \latin{et al.}~\cite{Brahme} propose other topologies for direct networks for HPC clusters. Al it is shown in this paper, they can also be defined using projective planes, although the authors use perfect difference sets for their definition. In this paper it is shown how incidence graphs of finite projective planes are suitable topologies for both direct and indirect networks for HPC systems.

Recently, three strongly related papers have been published. We summarize next their main achievements and bring to light how the results introduced  in our paper improve them. In~\cite{Rumley2015}, the authors propose a methodology based on minimizing average distance to identify optimal topologies for Exascale systems. Therefore, topologies close to the generalized Moore bound are searched. In this aim, several compositions (Cartesian graph products in general) of known topologies are explored. However, in this analysis neither the symmetry nor the link utilization of the topologies are included and, therefore, the comparison may not reflect actual network performance.
%the comparison is not fair.
in~\cite{Besta} the Slim Fly (SF) network is proposed. This topology provides very high scalability for diameter 2, approaching the Moore bound. %is based on a known graph with good properties.
However, SF is neither symmetric nor well-balanced. Therefore, the number of compute nodes per router must be adjusted in order to give full bisection bandwidth. Moreover, this lack of symmetry makes SFs more costly than projective networks with the same diameter, which also provide higher scalability. 
Finally, in~\cite{Kathareios} several diameter 2 topologies are studied, namely Stacked Single-Path Tree, Multi-layer Full-Mesh, Slim Fly and Two-Level Orthogonal Fat Tree. The authors present experimental results which conclude that the Slim Fly and the OFT are the best direct and indirect topologies respectively. The present paper proves that topologies with diameter other than 2 such as projective networks are also interesting. Furthermore, a more accessible construction of the OFT and its relation with other topologies is given.

The rest of the paper is organized as follows. Section~\ref{sec:model} describes the cost model
assumed in this paper. An expression based on average distance and link utilization which upper bounds the cost is obtained. As it will be shown, maximizing the number of terminals while maintaining the average distance and link utilization will be the target, which will be related to the generalized Moore bound. In Section~\ref{sec:projective} \textsl{Projective Networks} are introduced, % as competitive alternatives to the recently proposed Slim Fly \cite{Besta}. These networks are
defined using incidence graphs of projective planes with the smallest average distance for their size and higher symmetry.
In Section~\ref{sec:topologies} a thorough analysis of how graph theoreticians have solved the generalized Moore bound for diameters 1--6 is done. This allows to present a complete comparative, in terms of our power/cost model, of all these topologies in Section \ref{sec:comparative}, with special emphasis on the diameter 2 case.
In Section~\ref{sec:indirect} the case for indirect networks is considered. The cost model is adapted for indirect networks of diameter 2. As it will be shown, optimal topologies can also be obtained with our methodology to derive projective networks. Finally, in Section~\ref{sec:conclusions} the main achievements of the paper are summarized.

\section{Power and cost optimization}\label{sec:model}

The interconnection network constitutes a significant fraction of the cost of an High Performance Computing (HPC) or datacenter system, both in terms of installation %(CAPEX)
and operation, % (OPEX),
with the latter mainly dominated by energy costs. This section introduces a coarse-grain generic cost model based on the network average distance and average link utilization. This cost model will be employed to compare different topologies in next sections.

A network should provide the required bandwidth to its collection of compute nodes with minimal latency, while scaling to the required size. Measures of interest are throughput and average latency under uniform traffic. This uniform traffic pattern not only determines the topological properties of the network, but also appears in multiple workloads (such as data-intensive applications or in many collective primitives) and determines the worst-case performance when using routing randomization~\cite{Valiant_ACM}.

An important figure in the deploying of a network is the number of ports in each router chip, also called router radix. This number is a technological constraint, and current 100~Gbps designs typically only support 32 to 48 ports~\cite{Broadcom2015,MellanoxSB7790-2015,Derradji2015,OmniPath2015}. Different configurations of these  switches, or alternative designs~\cite{Chrysos}, provide more than a hundred ports but at lower speeds, typically 25~Gbps. Larger non-blocking routers are built employing multiple routing chips, at the cost of an increased complexity and at least triple switching latency~\cite{MellanoxCS7500-2015,OmniPathDirector2015}.

Thus, our goal will be to build a network for $T$ computing nodes using routers of radix $R$, able to manage uniform traffic at full-bisection bandwidth and minimizing its cost.
Therefore, the use of the expression \textsl{optimal network} along this document refers to this optimization problem.
Let us consider next in more detail such requirements.

For simplicity, all links are assumed to have the same transmission rate, not only links between routers but also links from computing nodes.
%CCC: No se habla de phits en ningun sitio.
%Thus, transfer rates will be expressed in phits/cycle, where a phit is the width of a channel.
The notation used throughout the paper is presented in Table~\ref{tbl:notation}. $\Delta$ is employed to refer to the degree of a graph $G$; when $G$ is a $\Delta$-regular graph, $2|E(G)|=N\Delta$. Similarly, $\Delta_0$ is generally equal to all routers; in such case the router radix is $R=\Delta+\Delta_0$ and the number of compute nodes $T=N\Delta_0$.

\begin{table}
    \centering
    \resizebox{\linewidth}{!}{%
        \begin{tabular}{|c|l|}
          %\tiny
          \hline
          % after \\: \hline or \cline{col1-col2} \cline{col3-col4} ...
          Parameter & Definition \\
          \hline
          $T$ & Number of compute nodes or terminals. \\
          $R$ & Router radix (number of ports). \\
          $G(V,E)$ & Graph whose vertices $V$ represent the routers\\
            &  and its edges $E$ the connection between routers.\\
          $N=|V|$ & Number of routers.\\
          $\Delta$ & Maximum degree of $G$. \\
          $\Delta_0$ & Number of compute nodes attached to every router.\\
          $k$ & diameter of $G$.\\
          $\bar k$ & Average distance of $G$.\\
          $a$ & Load accepted by each router in saturation.\\
          $u$ & Average utilization of links.\\
          \hline
        \end{tabular}
    }\vskip 1em minus .5em
    \caption{Notation used in the paper.}
	\label{tbl:notation}
\end{table}

\subsection{Network Dimensioning and Cost Model}\label{sect:cost-models}

In this subsection a generic cost model for both power and hardware required by the network is introduced. This cost depends not only on the average distance of the topology, but also on the average utilization of the network links. Previous works such as \cite{Besta, Rumley2015} do not consider network utilization in their calculations, what leads to suboptimal results.

First, the number of compute nodes $\Delta_0$ which can be serviced per router is estimated. In this aim, \textsl{ideal routers} with minimal routing and a uniform traffic pattern will be assumed.
As the load $a$ increases, the saturation point is reached when some network link becomes in use all the cycles.
When this happens, the network links will have an average utilization $u\in(0,1]$. If $u=1$ then $G$ is said \firstuse{well-balanced}. Being $G$ \firstuse{edge-transitive} is a sufficient but not necessary condition to be well-balanced~\cite{Camarero_thesis}.

If the load injected per cycle per router at saturation is $a$,\footnote{All routers are assumed to inject approximately the same load at saturation.}  then the average utilization $u$ is
	$$u=\frac{load}{\#links}=\frac{aN\bar k}{2|E(G)|}=\frac{a\bar k}{\Delta}.$$
The load in terms of the utilization is $a=\Delta\dfrac{u}{\bar k}$.
Therefore, the number of compute nodes per router $\Delta_0$ which can be serviced without reaching the saturation point is:
\begin{equation}\label{eq:nodes}
	\Delta_0\leq \Delta\frac{u}{\bar k}.
\end{equation}
Ideally, the equality should hold. If Equation (\ref{eq:nodes}) does not hold, the network is said to be \firstuse{oversubscribed}, and does not provide full bisection bandwidth under uniform traffic. Conversely, for $\Delta_0$ lower than the equality value, the network is oversized for the number of compute nodes connected.

Now, a generic estimation for the network cost per computing node $C_{node}$ is considered, which is also particularized to economic or power terms ($C_{node-\$}$ and $C_{node-W}$ in \$ and Watts, respectively).
A generic average cost $c_i$ per injection port, $c_t$ per transit port, and $c_r$ per router are assumed. The resultant cost per compute node is
$$C_{node} = \frac{N}{T}\cdot(c_i \Delta_0+c_t \Delta+c_r ) = \frac{c_i N\Delta_0+c_t N\Delta+c_r N}{T}.$$

Considering the equality value in Equation (\ref{eq:nodes}), $T=N\Delta_0$ and $R=\Delta+\Delta_0$, it results:
	%=\beta+\beta\frac{\Delta}{\Delta_0}+\gamma\frac{1}{\Delta_0}
\begin{equation}\label{eq:cost}
C_{node} = c_i + c_t\frac{\bar k}{u} + c_r \frac{1+\bar k/u}{R}.
\end{equation}

For the installation cost $C_{node-\$}$, router and transit links comprise the largest amounts. The router cost is roughly proportional to the number of ports, so it contributes a large amount to $c_i, c_t$ and a small amount to $c_r$~\cite{Besta}. Regarding links, as network speed increases optics are expected to displace copper for even shorter distances, including both intra-rack and on-board communications~\cite{Doany2014}. %For simplicity, 
When all network links are active optical cables their cost is largely independent of their length, since it is dominated by the optical transceivers in the ends. This leads to $c_i=c_t >> c_r$, with  $c_i=c_t$ approximately constant.
% ; Subsection~\ref{sub:layout} considers network layouts using short electrical cables with lower cost.
Therefore, the largest component of the installation cost in Equation (\ref{eq:cost}) will be determined by the router ports, $C_{node-\$} \approx c_t(1+\frac{\bar k}{u})$. A more detailed analysis considering different types of cables is presented in Section~\ref{sec:comparative}.

For the energy cost $C_{node-W}$, the most significant part are the router SerDes (which imply large $c_i, c_t$ and small $c_r$); for example, the router design in~\cite{Chrysos} dedicates 87\% of its power to SerDes. Again, this leads to the same result as for the installation cost, concluding that the best cost is obtained using
%Therefore, the minimum cost for a given router radix $R$, both in terms of acquisition or power consumption, is obtained using 
topologies that minimize $\frac{\bar k}{u}$.

\subsection{Moore Bounds}

In this subsection limits of the network size and its cost will be studied. This will be done by considering the limits of the Moore bound for the relation between the diameter and network size, and the generalized Moore bound for the relation between the average distance and network size, both for a given degree.

Section~\ref{sect:cost-models} concludes that cost depends linearly on $(1+\bar k/u)$.
This expression is minimized in the complete graph $K_N$, which is symmetric---hence $u=1$---and has minimum average distance $\bar k=1$.
However, the complete graph has $\Delta_0=N$, $R=2N-1$ and $T=N^2=\left(\frac{R+1}{2}\right)^2$. %Thus, with a radix $R=127$---a large value considering current router designs---the number of compute nodes would be $T=4096$, which is too small for many current systems.
With a radix $R=48$ the number of compute nodes would be only $T \approx 576$ nodes. %, which is too small for many current systems.

The Moore Bound~\cite{Miller} establishes that for a given diameter $k$ the maximum network size is bounded by:
\begin{equation}\label{eq:moore}
	N\leq M(\Delta,k)=\frac{\Delta(\Delta-1)^k-2}{\Delta-2}.
\end{equation}
This bound is obtained by assuming the following distance distribution---the number $W(t)$ of vertices at distance $t$ from any chosen vertex:
	$$
	W(t)=\begin{cases}
	1						&\text{if $t=0$}\\
	\Delta(\Delta-1)^{t-1}	&\text{otherwise.}
	\end{cases}
	$$
Therefore, the average distance of a Moore graph is
		$$\bar k=\frac{\sum_{t=1}^ktW(t)}{N-1}=\frac{\sum_{t=1}^k\Delta(\Delta-1)^{t-1}}{N-1}.$$
Then, it is straightforward that $\lim_{\Delta\rightarrow\infty}\bar k=k$.
There are good families of graphs approaching the Moore bound for low diameter, but they are restricted to very specific values in the number of nodes. Additionally, as derived from Equation (\ref{eq:cost}), the most important factor to minimize cost is the average distance $\bar k$, not the network diameter.

\firstuse{Generalized Moore graphs}~\cite{Sampels} reach the minimum average distance for a given router radix and number of vertices $N$. This is attained when there are the maximum amount of reachable nodes up to distance $k-1$, with the remaining nodes being at distance $k$. That is, with the following distance distribution:
% Podría citarse "The current status of the generalised moore graph problem"
	$$
	W(t)=\begin{cases}
	1						&\text{if $t=0$}\\
	\Delta(\Delta-1)^{t-1}	&\text{if $1\leq t\leq k-1$}\\
	N-M(\Delta,k-1)			&\text{if $t=k$.}
	\end{cases}
	$$
With this generalization, the average distance can be approximated---for large $\Delta$---as
\begin{equation}\label{eq:average}
	\bar k\approx k-\frac{\Delta^{k-1}}{N}.
\end{equation}

The generalized Moore bound determines the minimal average distance $\bar k$ (hence cost, given a well-balanced topology) for a given number of nodes $T$ and router radix $R$. Next, an expression relating these values and the diameter $k$ is obtained.
%Note that, in generalized Moore graphs generally $k=\lceil \bar k\rceil$ and in more rare cases $k=\lceil \bar k\rceil+1$.
Following Equation~\eqref{eq:nodes}, the number of compute nodes per router is $\Delta_0=\Delta/\bar k=(R-\Delta_0)/\bar k$. Thus, $R=\Delta_0\bar k+\Delta_0=\Delta_0(1+\bar k)$ and
	$$\Delta_0=\frac{R}{\bar k+1}.$$
The degree is
	$$\Delta=R-\Delta_0=R\left(1-\frac{1}{\bar k+1}\right)=R\frac{\bar k}{\bar k+1}.$$
The number of routers is
	$$N=\frac{T}{\Delta_0}=\frac{T}{R}(\bar k+1).$$
The difference $k-\bar k$ can be approximated (using Equation (\ref{eq:average})) by
	$$k-\bar k\approx
	\frac{\Delta^{k-1}}{N}
	=\frac{\left(R\frac{\bar k}{\bar k+1}\right)^{k-1}}{\frac{T}{R}(\bar k+1)}
	=\frac{R^k}{T} \frac{\bar k^{k-1}}{(\bar k+1)^k}.
	$$
Reordering terms, it is obtained the relation:
\begin{equation}\label{eq:final}
	T \approx \frac{R^k\bar k^{k-1}}{(k-\bar k)(\bar k+1)^k}
\end{equation}

This equation is used later as an upper bound for the number of compute nodes in direct topologies.

\section{Projective Networks: A Topology Based on Incidence Graphs of Finite Projective Planes}\label{sec:projective}

As argued in previous section, average distance and average link utilization are the target parameters to design optimal cost topologies.
In this section incidence graphs of projective planes are proposed to define network topologies attaining almost optimal values of these parameters. In Subsection~\ref{subsec:incidence} incidence graphs of finite projective planes are defined, which constitute a family of symmetric graphs with diameter 3 and average distance equal to $2.5$ in the limit. In Subsection~\ref{subsec:brown} such graphs are modified in such a way that their diameter and average distance both become 2. However, they are no longer symmetric although their link utilization equals 1 in the limit. These two families of graphs are used to define \textsl{Projective Networks} which, as it will be show in Subsection~\ref{subsec:discu}, result in a competitive alternative to the recently proposed Slim Fly network \cite{Besta}. Thus, in this section the methodology proposed in the paper is validated by a specific example.

\subsection{Incidence Graph of Finite Projective Planes}\label{subsec:incidence}

%%% CARMEN: "to tend to" https://en.wikipedia.org/wiki/Limit_of_a_sequence

A family of graphs with an average distance tending to $2.5$ can be obtained as the incidence graph of finite projective planes. Next, an algorithmic description of these graphs is given, although a more geometrical approach is considered in Example \ref{ex:Fano}. Since these graphs are defined in terms of finite projective planes, let us first introduce this concept.

Let $q$ be any power of a prime number. A canonic set of representatives of the finite projective plane over the field with $q$ elements $\mathbb F_q$ is
	$$P_2(\mathbb F_q)=\{(1,x,y),(0,1,x),(0,0,1)\mid x,y\in \mathbb F_q\}.$$

\begin{remark} By a straightforward counting argument, it can be proved that $P_2(\mathbb F_q)$ has $q^2+q+1$ elements. \end{remark}

%The space $P_2(\mathbb F_q)$ contains also $q^2+q+1$ lines of exactly $q+1$ points each. As the plane is self-dual every line is represented by its dual point.

Two points $X, Y\in P_2(\mathbb F_q)$ are said \firstuse{orthogonal} (written $X\perp Y$) if their scalar product is zero. The space $P_2(\mathbb F_q)$ contains also $q^2+q+1$ lines of exactly $q+1$ points each. %As the plane is self-dual
Every line is represented by its dual point in the projective plane. A line $L$ is incident to a point $P$ if and only if $P$ is orthogonal to the dual point of $L$. This fact allows the following definition.

\begin{definition} Let $q$ be a power of a prime number. Let $G_q = (V, E)$ be the graph with vertex set
$$V=\{ (s,P) \mid s\in\{0,1\},\ P\in P_2(\mathbb F_q) \}$$
and edges set
	$$E=\bigl\{ \{(0,P),(1,L)\} \mid P\perp L,\ P,L\in P_2(\mathbb F_q) \bigr\}.$$
Thus, $G_q$ is said to be the \textsl{incidence graph of the finite projective plane} $P_2(\mathbb F_q)$.
\end{definition}

\begin{remark}
Incidence graphs, also called \textsl{Levi graphs}, can be applied to any incidence structure~\cite{Gross}.
%Note that $G_q$ is the Levi graph when the incidence structure is a finite projective plane.
Note that $G_q$ is the Levi graph with a finite projective plane as the incidence structure.
\end{remark}

It is clear that $G_q$ has $2q^2+2q+2$ vertices. Let us consider the following example to better understand this construction.

\begin{example}\label{ex:Fano} Let us consider the graph $G_2$. In Figure~\ref{fig:LeviPG} two different structures are represented. On the left side, a typical graphical representation of $P_2(\mathbb F_2)$, or the Fano plane, is shown. In this representation, both the 7 points and their incident lines of the Fano plane are labeled with their homogeneous coordinates. Note that the point 100 is incident to the line \emph{001} since the scalar product of their coordinates is zero. On the right side of the figure, a graphical representation of the incidence graph of the Fano plane, denoted by $G_2$, is shown. There are two kinds of vertices, which are the points and the lines of the Fano plane. Now, two vertices are adjacent if the corresponding point and line are incident. Therefore, since point 100 is incident to line \emph{001} as we have seen before, in the graph there is an edge making them adjacent vertices. As it can be seen, every vertex has degree 3 and there are minimal paths of lengths 1, 2 or 3.

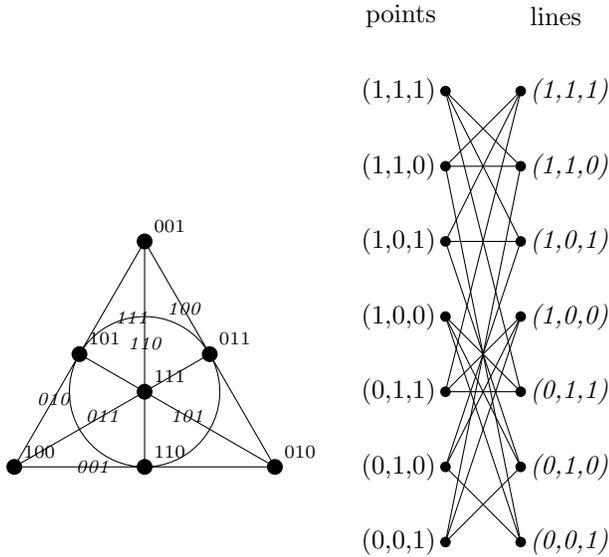
\begin{figure}
	\begin{center}
	\begin{tikzpicture}
	%Fano plane PG(2,2)
	\begin{scope}[
			point/.style={anchor=south west,font=\scriptsize},
			line/.style={font=\scriptsize\em},
			]
		\fill (0,0) circle (3pt) node[point] {111}
			(30:1)  circle (3pt) node[point] {011}
			(90:2)  circle (3pt) node[point] {001}
			(150:1) circle (3pt) node[point] {101}
			(210:2) circle (3pt) node[point] {100}
			(270:1) circle (3pt) node[point] {110}
			(330:2) circle (3pt) node[point] {010};
		\draw (0,0) circle (1) (100:1) node[line] {111}
			(90:2) -- node[line,pos=0.3]{100} (330:2)%by 30:1; 001--010
			(210:2) -- node[line,pos=0.3] {010} (90:2)%by 150:1; 100--001
			(210:2) -- node[line,pos=0.3] {001} (330:2)%by 170:1; 100--010
			(30:1) -- node[line,pos=0.55] {011} (210:2)%011--100
			(150:1) -- node[line,pos=0.55] {101} (330:2)%101--010
			(270:1) -- node[line,pos=0.55]{110} (90:2);%110--001
	\end{scope}
	%Levi graph Levi(PG(2,2))
	\begin{scope}[shift={(4,-3)}]
		\foreach \a in {0,1}
		\foreach \b in {0,1}
		\foreach \c in {0,1}
		{
			%\pgfmathtruncatemacro\cond{\a=0 and \b=0 and \c=0}
			\pgfmathtruncatemacro\heigh{\a*4+\b*2+\c}
			\ifthenelse{\heigh=0}{}
			{
				\path[fill] (0,\heigh) node[anchor=east] {(\a,\b,\c)} circle (2pt) coordinate (left \a \b \c) ++(1,0) coordinate (right \a \b \c) circle (2pt) node [anchor=west] {\em(\a,\b,\c)};
			}
		}
		\draw (0,8) node[anchor=east] {points} ++(1,0) node [anchor=west] {lines};
		\foreach \la in {0,1}
		\foreach \lb in {0,1}
		\foreach \lc in {0,1}
		\foreach \ra in {0,1}
		\foreach \rb in {0,1}
		\foreach \rc in {0,1}
		{
			\pgfmathtruncatemacro\goodl{\la==1 || \lb==1 || \lc==1}
			\pgfmathtruncatemacro\goodr{\ra==1 || \rb==1 || \rc==1}
			\pgfmathtruncatemacro\dot{mod(\la*\ra + \lb*\rb +\lc*\rc,2)}
			\pgfmathtruncatemacro\good{\goodl && \goodr && \dot==0}
			\ifthenelse{\good=1}
			{
				\draw (left \la \lb \lc) -- (right \ra \rb \rc);
			}
		}
	\end{scope}
	\end{tikzpicture}
	\end{center}
	\caption{Left: the projective plane $P_2(\mathbb F_2)$, also known as the Fano plane. Right: the incidence graph $G_2$, also known as Heawood graph.}
	\label{fig:LeviPG}
\end{figure}

\end{example}

It is known that for any two different points $X,Y\in P_2(\mathbb F_q)$ there is a unique $Z\in P_2(\mathbb F_q)$ such that $X\perp Z$ and $Z\perp Y$.
This implies that the half of the vertices $(0,X)$ of $G_q$ are at distance 2 from $(0,(1,1,1))$ and the other half are at distance at most 3.
$P_2(\mathbb F_q)$ also satisfies that there are $q+1$ orthogonal points to any given one. Thus, in general $G_q$ is a bipartite graph of degree $\Delta=q+1$ with distance distribution
	$$W(t)=\begin{cases}
		1			&\text{if $t=0$}\\
		q+1			&\text{if $t=1$}\\
		q^2+q		&\text{if $t=2$}\\
		q^2			&\text{if $t=3$.}\\
	\end{cases}$$

As a consequence, the average distance of $G_q$ is
	$$\bar k=
	\frac{5q^2+3q+1}{2q^2+2q+1}
	=2.5-\frac{2q+1.5}{2q^2+2q+1}.$$
Thus, the limit of $\overline{k}$ is 2.5 and its diameter $k=3$. Moreover, it can be proved that $G_q$ is symmetric, which gives the optimal average link utilization.

\begin{theorem}\label{thm:symmetry} $G_q$ is symmetric.
\end{theorem}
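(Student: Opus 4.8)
The plan is to prove that $G_q$ is \emph{arc-transitive}, i.e.\ that $\mathrm{Aut}(G_q)$ acts transitively on ordered pairs of adjacent vertices; this is the strong notion of symmetry and it implies both vertex- and edge-transitivity. Since $G_q$ is bipartite with parts indexed by $s\in\{0,1\}$, I would first exploit the fact that the incidence relation $P\perp L$ comes from the \emph{symmetric} bilinear form (the dot product). This makes the side-swap map $\sigma(s,P)=(1-s,P)$ an automorphism: it sends an edge $\{(0,P),(1,L)\}$ to $\{(0,L),(1,P)\}$, which is again an edge precisely because $P\perp L \iff L\perp P$. Thus $\sigma$ exchanges the two parts of the bipartition and, in particular, interchanges the two flavours of arc, $((0,P),(1,L))$ and $((1,L),(0,P))$. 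Consequently it suffices to prove transitivity on the point-to-line arcs alone, that is, on the flags (incident point–line pairs) of the plane.

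Next I would produce a large group of automorphisms from linear algebra. For $g\in \mathrm{GL}(3,\mathbb{F}_q)$ I let it act on the point side by $(0,P)\mapsto(0,gP)$ and on the line side by the \emph{contragredient} $(1,L)\mapsto(1,(g^{-1})^{T}L)$. The one identity to check is that this preserves incidence: $(gP)^{T}(g^{-1})^{T}L = P^{T}\bigl(g^{T}(g^{-1})^{T}\bigr)L = P^{T}(g^{-1}g)^{T}L = P^{T}L$, so $P\perp L$ forces $gP\perp (g^{-1})^{T}L$, and each $g$ induces an automorphism $\phi_g$. The contragredient is unavoidable here — a bare linear map does not preserve the dot product — and this is exactly the step where one must not let $g$ act identically on both sides. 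Geometrically this is transparent once one adopts the dictionary ``line $[w]\leftrightarrow$ hyperplane $w^{\perp}$'': then $\phi_g$ simply sends the $1$-dimensional subspace $U$ to $gU$ and the $2$-dimensional subspace $W$ to $gW$, a uniform action of $g$ on subspaces.

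With that dictionary in place, the flag-transitivity of $\{\phi_g\}$ is routine. A flag becomes a nested pair $U\subset W$ with $\dim U=1$, $\dim W=2$, and every incident point–line pair is of this form (the ``absolute'' points that looked special in dual-point coordinates disappear in the subspace description). Given two flags, I would pick adapted bases $a_1,a_2,a_3$ and $b_1,b_2,b_3$ of $\mathbb{F}_q^{3}$ with $U=\langle a_1\rangle$, $W=\langle a_1,a_2\rangle$ (and likewise for $b$), and take the unique $g$ with $ga_i=b_i$; then $\phi_g$ carries the first flag to the second. This establishes transitivity on all point-to-line arcs.

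Finally I would assemble the pieces: $\{\phi_g\}$ is transitive on one flavour of arc while $\sigma$ exchanges the two flavours, so the group they generate is transitive on \emph{all} arcs of $G_q$; hence $G_q$ is arc-transitive and therefore symmetric, yielding the optimal $u=1$ claimed in the text. I expect the only real obstacle to be bookkeeping: fixing the dictionary ``line $[w]\leftrightarrow$ hyperplane $w^{\perp}$'' so that the contragredient action genuinely matches the natural action on subspaces, and confirming that the basis-completion argument for flag-transitivity survives this translation. Once that correspondence is pinned down, both the incidence verification and the flag-transitivity become mechanical.
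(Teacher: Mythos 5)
Your proof is correct and takes essentially the same route as the paper's: automorphisms induced by $\mathrm{GL}(3,\mathbb F_q)$ yield flag-transitivity (the paper picks a second point $Q$ on $L$ and $Q'$ on $L'$ and chooses $M$ with $M[P,Q]=[P',Q']$, which is exactly your adapted-basis argument), and the side-swap $(s,P)\mapsto(1-s,P)$ completes vertex-transitivity. Your one refinement is making explicit the contragredient action $(1,L)\mapsto\bigl(1,(g^{-1})^{T}L\bigr)$ on the line side, a detail the paper leaves implicit when it asserts that the linear maps ``preserve the incidence relation.''
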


\begin{proof}
For any invertible matrix $M \in \mathcal{M}_3(\mathbb F_q)$, the application that maps the point $P$ to the point $MP$ is an automorphism of the projective plane $P_2(\mathbb F_q)$, since it maps subspaces to subspaces.
As they preserve the incidence relation, they are also automorphisms of $G_q$.

Now, in order to prove both vertex-transitivity and edge-transitivity, let us prove that for any vertices $(0,P)$, $(1,L)$, $(0,P')$ and $(1,L')$ with $(0,P)$ adjacent to $(1,L)$ and $(0,P')$ adjacent to $(1,L')$ there is a graph automorphism that maps $(0,P)$ into $(0,P')$ and $(0,L)$ into $(0,L')$.
This is equivalent to finding an automorphism $\varphi$ of $P_2(\mathbb F_q)$ that maps the point $P$ into $P'$ and the line $L$ into $L'$.
Let $Q$ be any other point in the line $L$ and $Q'$ any other point in the line $L'$.
By linear algebra there is an invertible matrix $M$ such that $M[P,Q]=[P',Q']$. The induced automorphism is the one desired.
To complete the vertex-transitivity note that mapping $(s,P)$ into $(1-s,P)$ is a graph automorphism.
\end{proof}

An interesting case of $G_q$ graphs is the one in which $q = p^2$ is a square, where $p$ is a power of a prime. In this case, the projective plane $P_2(\mathbb{F}_{p^2})$ can be partitioned into $p^2-p+1$ subplanes $P_2(\mathbb{F}_{p})$~\cite{Hirschfeld}. This implies that $G_{p^2}$ can be partitioned into $p^2-p+1$ graphs isomorphic to $G_p$, which leads to an straightforward layout of the network. Figure~\ref{fig:G_4} shows the partitioning of $G_4$ as an example. In this figure global links are represented with red dashed lines and local links with solid black lines. The local links induce $3 = 2^2-2+1$ subgraphs isomorphic to $G_2$. The label of the vertices refers to the field isomorphism given by $\mathbb F_4\cong \frac{\mathbb F_2[x]}{(x^2+x+1)}.$ Note that the number of global links is almost the square of the local links.

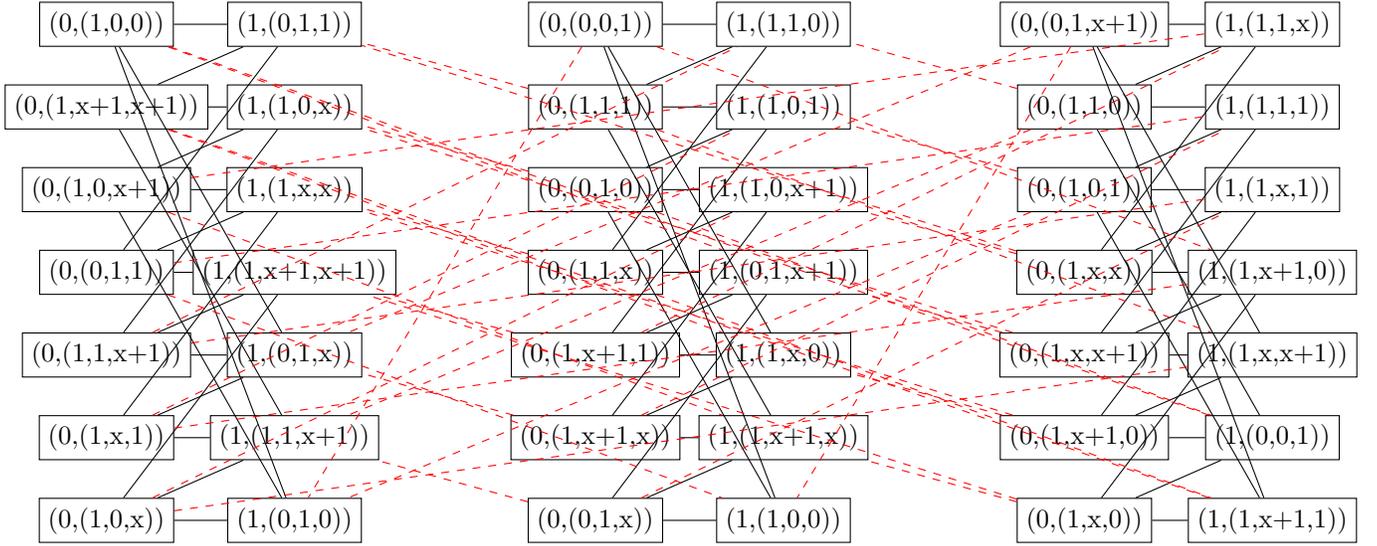
\begin{figure*}
	\begin{center}
	\begin{tikzpicture}[every node/.style={draw}]
	\expandafter\def\csname point0\endcsname {(0,(1,0,0))}
	\expandafter\def\csname point1\endcsname {(0,(0,0,1))}
	\expandafter\def\csname point2\endcsname {(0,(0,1,x+1))}
	\expandafter\def\csname point3\endcsname {(0,(1,x+1,x+1))}
	\expandafter\def\csname point4\endcsname {(0,(1,1,1))}
	\expandafter\def\csname point5\endcsname {(0,(1,1,0))}
	\expandafter\def\csname point6\endcsname {(0,(1,0,x+1))}
	\expandafter\def\csname point7\endcsname {(0,(0,1,0))}
	\expandafter\def\csname point8\endcsname {(0,(1,0,1))}
	\expandafter\def\csname point9\endcsname {(0,(0,1,1))}
	\expandafter\def\csname point10\endcsname{(0,(1,1,x))}
	\expandafter\def\csname point11\endcsname{(0,(1,x,x))}
	\expandafter\def\csname point12\endcsname{(0,(1,1,x+1))}
	\expandafter\def\csname point13\endcsname{(0,(1,x+1,1))}
	\expandafter\def\csname point14\endcsname{(0,(1,x,x+1))}
	\expandafter\def\csname point15\endcsname{(0,(1,x,1))}
	\expandafter\def\csname point16\endcsname{(0,(1,x+1,x))}
	\expandafter\def\csname point17\endcsname{(0,(1,x+1,0))}
	\expandafter\def\csname point18\endcsname{(0,(1,0,x))}
	\expandafter\def\csname point19\endcsname{(0,(0,1,x))}
	\expandafter\def\csname point20\endcsname{(0,(1,x,0))}
	\expandafter\def\csname line0\endcsname  {(1,(0,1,1))}
	\expandafter\def\csname line1\endcsname  {(1,(1,1,0))}
	\expandafter\def\csname line2\endcsname  {(1,(1,1,x))}
	\expandafter\def\csname line3\endcsname  {(1,(1,0,x))}
	\expandafter\def\csname line4\endcsname  {(1,(1,0,1))}
	\expandafter\def\csname line5\endcsname  {(1,(1,1,1))}
	\expandafter\def\csname line6\endcsname  {(1,(1,x,x))}
	\expandafter\def\csname line7\endcsname  {(1,(1,0,x+1))}
	\expandafter\def\csname line8\endcsname  {(1,(1,x,1))}
	\expandafter\def\csname line9\endcsname  {(1,(1,x+1,x+1))}
	\expandafter\def\csname line10\endcsname {(1,(0,1,x+1))}
	\expandafter\def\csname line11\endcsname {(1,(1,x+1,0))}
	\expandafter\def\csname line12\endcsname {(1,(0,1,x))}
	\expandafter\def\csname line13\endcsname {(1,(1,x,0))}
	\expandafter\def\csname line14\endcsname {(1,(1,x,x+1))}
	\expandafter\def\csname line15\endcsname {(1,(1,1,x+1))}
	\expandafter\def\csname line16\endcsname {(1,(1,x+1,x))}
	\expandafter\def\csname line17\endcsname {(1,(0,0,1))}
	\expandafter\def\csname line18\endcsname {(1,(0,1,0))}
	\expandafter\def\csname line19\endcsname {(1,(1,0,0))}
	\expandafter\def\csname line20\endcsname {(1,(1,x+1,1))}
	\foreach \i in {0,...,20}
	{
		\pgfmathsetmacro\subplanepos{int(\i/3)}
		\pgfmathsetmacro\subplane{int(mod(\i,3))}
		\node (point\i) at (6.5*\subplane,-\subplanepos*1.1) {\csname point\i\endcsname};
		\node (line\i) at (6.5*\subplane+2.5,-\subplanepos*1.1) {\csname line\i\endcsname};
	}
	%Local edges
	%\foreach \p/\l in {0/0,0/18,0/12}
	\foreach \p in {0,...,20}
	\foreach \inc in {0,12,18}
	{
		\pgfmathtruncatemacro\l{mod(\p+\inc,21)}
		\draw (point\p) -- (line\l);
	}
	%Global edges
	%\foreach \p/\l in {0/0,0/18,0/12}
	\foreach \p in {0,...,20}
	\foreach \inc in {10,17}
	{
		\pgfmathtruncatemacro\l{mod(\p+\inc,21)}
		\draw[red,dashed] (point\p) -- (line\l);
	}
	\end{tikzpicture}
	\end{center}
	\caption{A layout for $G_4$ based on subplanes of $P_2(\mathbb{F}_4)$.}
	\label{fig:G_4}
\end{figure*}

\subsection{Modified Incidence Graph of Finite Projective Planes}\label{subsec:brown}

In the previous graph $G_q$, each vertex $(0, P)$ can be identified with its pair $(1, P)$, for every $P \in P_2(\mathbb{F}_q)$, giving a graph of diameter 2 very close to the Moore bound. Independently and simultaneously, Brown in \cite{Brown} and Erd\H{o}s \latin{et al.} in \cite{Erdos_hungarian} defined this graph, which is introduced next. Interestingly, Brahme \latin{et al.} have recently unknowingly reinvented these graphs with a different construction and in \cite{Brahme} they already proposed them for HPC clusters. However, in this paper the next definition will be considered as the network topology model.

\begin{definition}\label{def:Brown} Let $q$ be a power of a prime number. Let $\overline{G}_q = (V, E)$ be the graph with vertex set	$$V=P_2(\mathbb{F}_q)$$ and set of adjacencies $$E=\{ \{P,L\} \mid P\perp L,\ P \neq L,\ P,L\in P_2(\mathbb F_q)\}.$$
\end{definition}

Clearly, $\overline{G}_q $ has $q^2+q+1$ vertices. Now, since $P_2(\mathbb{F}_q)$ contains $q+1$ points $X$ such that $X\perp X$, this graph
is a non-regular graph with degrees $q$ and $q+1$. Hence, its number of vertices is $N=q^2+q+1=\Delta^2-\Delta+1$, where $\Delta=q+1$ is the maximum degree. Note that this expression is very close to the Moore bound $M(\Delta,2)=\Delta^2+1$. In the next example it is shown how $\overline{G}_2$ is obtained from $G_2$.

\begin{example}
In Figure \ref{fig:Brown} the graph $\overline{G}_2$ is represented. Note that this is the modified incidence graph obtained from $G_2 ,$ which was considered in Example \ref{ex:Fano}. Therefore, vertex 111, which is obtained identifying point and line 111 in $G_2$, is adjacent to 110, since point and line 110 where adjacent in $G_2$ to 111.

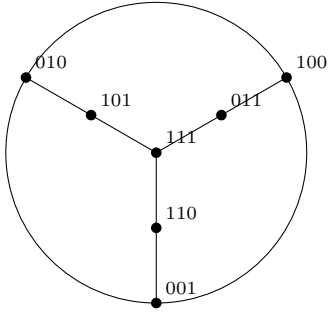
\begin{figure}
	\begin{center}
	\begin{tikzpicture}
	\begin{scope}[
			vertex/.style={anchor=south west,font=\scriptsize},
			]
		\fill (0,0) circle (2pt) node[vertex] {111}
			(30:1)  circle (2pt) node[vertex] {011}
			(30:2)  circle (2pt) node[vertex] {100}
			(150:1) circle (2pt) node[vertex] {101}
			(150:2) circle (2pt) node[vertex] {010}
			(270:1) circle (2pt) node[vertex] {110}
			(270:2) circle (2pt) node[vertex] {001}
			;
		\draw (0,0) circle (2)
			(0,0) -- (30:2)
			(0,0) -- (150:2)
			(0,0) -- (270:2)
			;
	\end{scope}
	\end{tikzpicture}
	\end{center}
	\caption{Modified incidence graph $\overline{G}_2$.}
	\label{fig:Brown}
\end{figure}

\end{example}

\begin{lemma} For each pair of vertices of $\overline{G}_q$ there is a unique minimum path.
\end{lemma}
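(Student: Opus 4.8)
The plan is to exploit the fact, recalled just before Theorem~\ref{thm:symmetry}, that for any two distinct points $X,Y\in P_2(\mathbb F_q)$ there is a \emph{unique} $Z$ with $X\perp Z$ and $Z\perp Y$. Since this already pins down a single common-neighbour candidate, the whole statement should follow from a short case analysis on whether the two chosen vertices are adjacent. First I would note that $\overline{G}_q$ has diameter $2$: for distinct non-adjacent $X,Y$ the point $Z$ above supplies a walk $X,Z,Y$, so any minimum path has length $1$ (adjacent case) or $2$ (non-adjacent case). Thus uniqueness of a minimum path only ever concerns paths of length at most two, and no longer paths need to be considered.

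For the non-adjacent case---$X\neq Y$ with $X\not\perp Y$---a minimum path is necessarily of the form $X,W,Y$ with $X\perp W$, $W\perp Y$, and $W$ an actual vertex distinct from $X$ and $Y$ (so that both edges genuinely exist in the simple graph $\overline{G}_q$). Every such $W$ is a point orthogonal to both $X$ and $Y$, hence by the cited fact $W=Z$ is forced. This immediately gives at most one minimum path, while the existence of $Z$ gives at least one; I would therefore only need to check that $Z\notin\{X,Y\}$, which follows because $Z=X$ would force $X\perp Y$ and $Z=Y$ would force $Y\perp X$, either contradicting non-adjacency. Hence the length-$2$ path through $Z$ is the unique minimum path.

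In the adjacent case---$X\perp Y$, $X\neq Y$---the single edge $\{X,Y\}$ is a path of length $1$, which is strictly shorter than any length-$2$ path, so it is the minimum; since $\overline{G}_q$ is simple there is exactly one edge joining $X$ and $Y$, and uniqueness is automatic. The only point that requires genuine care, and which I regard as the main (if minor) obstacle, is the bookkeeping around degenerate walks and self-orthogonal points: a priori a ``path'' $X,W,Y$ could collapse if $W$ coincided with an endpoint, and the definition of $\overline{G}_q$ deliberately forbids self-loops at the $q+1$ self-orthogonal points with $X\perp X$. Verifying $Z\notin\{X,Y\}$ as above rules out the first issue, and since self-orthogonality never creates an edge it plays no role in connecting distinct vertices; once these checks are in place the proof is complete.
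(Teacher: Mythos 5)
Your proof is correct, and it reaches the conclusion by a more direct route than the paper's. You reduce everything to the fact quoted in Subsection~\ref{subsec:incidence}---for distinct $X,Y\in P_2(\mathbb F_q)$ there is a \emph{unique} $Z$ with $X\perp Z$ and $Z\perp Y$---so existence and uniqueness of the length-2 path come in a single stroke, and your checks that $Z\notin\{X,Y\}$ and that self-orthogonal points create no loops handle the degenerate cases cleanly (the paper glosses over the former; note that its middle vertex, the vector product $X\times Y$, is exactly your $Z$ and could coincide with an endpoint only if $X\perp Y$). The paper instead argues existence via the vector product and proves uniqueness by showing $\overline{G}_q$ contains no squares: if $P,Q$ were both adjacent to $X$ and $Y$, the cross point $C$ of the lines $PQ$ and $XY$ would be orthogonal to every point of both lines, forcing a degree exceeding $q+1$. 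The two arguments are equivalent in content---``no $C_4$'' is precisely ``at most one common neighbour''---but yours outsources the geometric core to a fact the paper states without proof, buying brevity and an explicit diameter-2 statement, whereas the paper's cross-point argument is self-contained, in effect (re)proving that two distinct dual lines of the projective plane meet in exactly one point rather than citing it.
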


\begin{proof} Let $P, Q$ be two vertices in $\overline{G}_q$. If $P$ and $Q$ are adjacent, straightforwardly there is a unique edge joining them. On the contrary, if
they are not adjacent, their vector product is adjacent to both, which gives a minimum path between them. If any other minimum path were exist, the two paths will form a square in the graph, which is not possible.
%as left as exercise in \cite{Brown}.

The nonexistence of a square can be proved as follows. Let the points $P$, $Q$ be adjacent to the points $X$ and $Y$. Let $C$ be the cross point of the lines $PQ$ and $XY$. Point $C$ is adjacent to $P$ and $Q$, since it is a linear combination of $X$ and $Y$. In the same way it is adjacent to $X$ and $Y$. Furthermore, $C$ is adjacent to all the points in the lines $PQ$ and $XY$, and hence to all the points in the plane, which contradicts the maximum degree being $q+1$.
\end{proof}

\begin{theorem} The link utilization of $\overline{G}_q$ is $u = \frac{2q^2+q+1}{2q(q+1)}.$
\end{theorem}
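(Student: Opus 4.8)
The plan is to read $u$ as the ratio of the mean link load to the maximum link load when $\overline{G}_q$ carries uniform traffic under minimal routing. The previous lemma supplies the essential structural input: every ordered pair of vertices is joined by a \emph{unique} shortest path, so each source--destination pair injects one unit of flow along a fully determined route, and the load on a directed link is simply the number of ordered pairs whose shortest path traverses it. At saturation the busiest link runs at full utilization, so every link then has utilization (its load)$/$(maximum load); averaging, $u = \overline{\ell}/\ell_{\max}$, where $\overline{\ell}$ and $\ell_{\max}$ are the average and maximum directed-link loads. The whole proof reduces to computing these two numbers.

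The average load is routine. The total load accumulated over all directed links equals $\sum_{x\neq y} d(x,y)$, because a shortest path of length $d$ uses exactly $d$ directed links. Since $\overline{G}_q$ has diameter $2$, a vertex is at distance $1$ from its neighbours and at distance $2$ from every other vertex; using the degree sequence recorded above ($q+1$ vertices of degree $q$ and $q^2$ of degree $q+1$) to split the ordered pairs into those at distance $1$ (there are $2|E|$ of them) and those at distance $2$, a direct count gives $\sum_{x\neq y} d(x,y) = q(q+1)(2q^2+q+1)$. Dividing by the number of directed links $2|E| = \sum_v \deg v = q(q+1)^2$ yields $\overline{\ell} = \frac{2q^2+q+1}{q+1}$.

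The heart of the argument is the evaluation of $\ell_{\max}$. For a fixed directed link $A\to B$ I would split the pairs routed through it into the direct pair $(A,B)$, the pairs $(A,y)$ whose length-$2$ path has $B$ as intermediate vertex, and the pairs $(x,B)$ whose path has $A$ as intermediate vertex. Counting the latter two families against the common neighbours of $A$ and $B$ gives the closed form $\ell(A\to B) = \deg A + \deg B - 1 - 2c$, where $c$ is the number of common neighbours of $A$ and $B$. Here I would reuse the fact already exploited in the lemma: the vectors orthogonal to both $A$ and $B$ form the orthogonal complement of $\mathrm{span}(A,B)$, so the only candidate common neighbour is the vector product $C = A\times B$, whence $c\le 1$. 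The decisive observation is that $C=A$ precisely when $A\perp A$ and $C=B$ precisely when $B\perp B$; since $C=A=B$ is impossible, an edge cannot join two of the $q+1$ self-orthogonal points, an edge incident with exactly one self-orthogonal point has $c=0$, and an edge between two non-self-orthogonal points has $c=1$. Substituting the degrees ($q$ at a self-orthogonal vertex, $q+1$ elsewhere) gives load $2q$ on edges touching a self-orthogonal point and $2q-1$ on the remaining edges, so $\ell_{\max}=2q$.

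Combining, $u = \overline{\ell}/\ell_{\max} = \frac{(2q^2+q+1)/(q+1)}{2q} = \frac{2q^2+q+1}{2q(q+1)}$, as claimed. I expect the only real obstacle to be the identification of $\ell_{\max}$: one must handle the common-neighbour count $c$ carefully and recognise that it is exactly the self-orthogonal vertices---where the vector product degenerates to an endpoint and so removes a transit route---that carry the heaviest links. Once that degeneracy is pinned down, both the averaging and the final arithmetic are mechanical.
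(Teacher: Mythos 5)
Your proposal is correct and follows essentially the same route as the paper: both classify edges by whether an endpoint is self-orthogonal (using the vector product to show the unique common neighbour degenerates to an endpoint exactly there), obtain per-arc loads of $2q$ on $q$--$(q+1)$ edges and $2q-1$ on $(q+1)$--$(q+1)$ edges, and compute $u$ as average load over maximum load. Your only deviations are cosmetic: you derive the average load $\overline{\ell}=\frac{2q^2+q+1}{q+1}$ from the total distance sum $\sum_{x\neq y}d(x,y)$ instead of weighting the counts of the two edge types as the paper does, and you package the paper's path enumeration into the closed form $\deg A+\deg B-1-2c$.
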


\begin{proof} The vector product of a vertex of degree $q$ and a vertex of degree $q+1$ is the vertex of degree $q$. It follows that there is no pair of adjacent vertices of degree $q$, since both should be their vector product. Thus, there are two types of edges: edges with endpoint degrees $q$--$(q+1)$ and edges with endpoint degrees $(q+1)$--$(q+1)$. The remainder of the proof consists on counting the amount of traffic over these links and their number.

First, let us consider edges of type $q$--$(q+1)$. Thus, let us denote $X$ the vertex of degree $q$ and $Y$ the vertex of degree $q+1$. There are $q+1$ vertices of degree $q$ and for each of these vertices there are $q$ edges, all of this type. Therefore, there are $q(q+1)$ vertices of this type.
The traffic traversing the arc from $X$ to $Y$ is composed from the traffic from: 1 path from $X$ to $Y$, $q-1$ paths from neighbours of $X$ to $Y$, and $q$ paths from $X$ to neighbours of $Y$; which gives a total of $2q$ paths.

Next, let us consider edges of type $(q+1)$--$(q+1)$. Let us denote the endpoints $X$ and $Y$.
The total number of edges in $\overline{G}_q$ is $\frac{q(q+1)+(q+1)q^2}{2}=\frac{q(q+1)^2}{2}.$
The number of edges of this type is then \begin{multline*}
	\frac{q(q+1)^2}{2}-q(q+1)=q\frac{(q^2+2q+1)-(2q+2)}{2}\\
	=\frac{q(q^2-1)}{2}.
\end{multline*}
The vertices $X$ and $Y$ have a common neighbour $X\times Y$, whose traffic does not go through this edge. Thus, the traffic from $X$ to $Y$ is due to: 1 path from $X$ to $Y$, $q-1$ paths from neighbors of $X$ to $Y$, and $q-1$ paths from $X$ to neighbours of $Y$; which constitute a total of $2q-1$ paths.

The maximum load is therefore on $q$--$(q+1)$ links. The average use of the links can be calculated as follows: $$\frac{(2q)(q(q+1))+(2q-1)\frac{q(q^2-1)}{2}}{\frac{q(q+1)^2}{2}}=\frac{2q^2+q+1}{q+1}.$$
Finally, the average link utilization at the saturation point is equal to the average use between the maximum use, this is,	$$u=\frac{\frac{2q^2+q+1}{q+1}}{2q}=\frac{2q^2+q+1}{2q(q+1)}.$$

\end{proof}

\begin{notation}
Previous families of graphs constitute the topological models of Projective Networks. We will refer to PN when the graph $G_q$ is considered, and to demi-PN when the graph $\overline{G}_q$ is selected.
\end{notation}

\section{Topologies Near the Moore Bound}\label{sec:topologies}

As stated in previous sections, our aim is to find topologies being optimal according to Equations \eqref{eq:cost} and \eqref{eq:final}. That is, for a given $\bar k$ and $R$, the goal is to find well-balanced topologies with maximum number of terminals $T$. Thus, in Subsection~\ref{subsec:peques}, topologies with small average distance are considered, that is, $\overline{k} \leq 2$. The MMS graph has been proposed for interconnection networks with the name of Slim Fly and for this reason it is analyzed in depth in Subsection \ref{subsec:slimfly}. Although the MMS graph is a generalized Moore graph with diameter 2 and $\overline{k} = 2$, its link utilization converges to $8/9$, so it does not reach the bound in Equation~\eqref{eq:final}. In Subsection \ref{subsec:grandes} some other projective constructions of a greater average distance than the ones presented in Section \ref{sec:projective} are summarized. In Subsection \ref{subsec:random} random graphs are considered since they are close to the Moore bound.

\subsection{Topologies with Small Average Distance}\label{subsec:peques}

In this subsection graph constructions approaching the generalized Moore bound and average distance between 1 and 2 are considered. Straightforwardly, the only graphs with $\bar k=1$ are the complete graphs,
which are indeed Moore graphs. As stated in previous section, complete graphs are the optimal topologies as long as routers with enough radix are available. There are many other generalized Moore graphs with $\bar k$ between 1 and 2, for example: the Turán graph, the Paley graph and the Hamming graph of dimension 2, which are described next. Some small examples are shown in Figure~\ref{fig:peques}.

The \firstuse{Turán graph}~\cite{Chartrand} Turán($n$,$r$) is a complete multipartite graph on $n$ vertices.
Let $s_1,\dotsc,s_r$ be $r$ subsets of $\{1,\dotsc,n\}$ with cardinal number $\lfloor n/r\rfloor$ or $\lceil n/r \rceil$. Then, two vertices are connected if and only if they are in different subsets.
Note that the Turán graph contains the complete bipartite graph as a special case:
	$$\text{Turán}(2n,2)\cong K_{n,n}.$$ In the limit the Turán graph has average distance $\lim_{N\rightarrow\infty}\bar k=1+\frac{1}{r}=1.5, 1.\bar 3, 1.25, 1.2, 1.1\bar 6, \dots$

The \firstuse{Paley graph}~\cite{Bollobas} is a graph with $\lim_{N\rightarrow\infty}\bar k=1.5$ very similar to the complete bipartite graph. Let $q$ be a prime power satisfying $q\equiv 1 \pmod 4$. Then, the Paley graph $\text{Paley}(q)$ is the graph whose vertices are the elements of the finite field of $q$ elements $\mathbb F_q$. Two vertices $a,b\in\mathbb F_q$ are connected in $\text{Paley}(q)$ if the difference $a-b$ has its square root in $\mathbb F_q$, \latin{i.e.,} if there is $x\in\mathbb F_q$ such that $a-b=x^2$. A notable property of this graph is that it is \firstuse{self-complementary}: it is isomorphic to the graph that connects vertices if they are not connected in the Paley graph. The Paley graph will appear again later as subgraph of the MMS graph (yet to be introduced).

The \firstuse{Hamming graph}~\cite{Mulder} of side $n$ and dimension 2 is defined as the Cartesian graph product of two complete graphs, $K_n\square K_n$. It is called Hamming graph since two vertices are adjacent if their Hamming distance is 1. In recent networking literature is known as flattened butterfly~\cite{Kim_flat_ISCA}; other names the Hamming graph has received are rook's graph, generalized hypercube~\cite{Bhuyan} and K-cube~\cite{LaForge}.
It has diameter $k=2$, average distance $\bar k=2-\frac{2}{n}-\frac{1}{n^2}$ and size $N=n^2=\Delta^2/4+\Delta+1$, so it is a factor $1/4$ from being asymptotically a Moore graph. Nevertheless, it is a generalized Moore graph, which can result paradoxical; but it can be seen that, although the average distance tends to 2 as a Moore graph would, it is always smaller.

%%%%%%%%%%%%%%%%% PEQUES %%%%%%%%%%%%%%
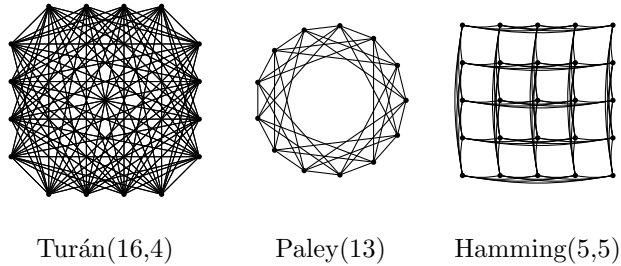
\begin{figure}
    \begin{center}
    \begin{tikzpicture}
    %Turán
    \begin{scope}[x=.5cm,y=.5cm]
        \foreach \a in {0,...,3}
        {
            \begin{scope}[rotate=90*\a]
            \foreach \b in {0,...,3}
            {
                %\path (\b-1.5,2.5) node {\a\b};
                \fill (\b-1.5,2.5) circle (1pt) coordinate (point\a\b);
            }
            \end{scope}
        }
        \foreach \a in {0,...,3}
        \foreach \b in {0,...,3}
        \foreach \c in {1,...,3}
        \foreach \d in {0,...,3}
        {
            \pgfmathtruncatemacro\r{mod(\a+\c,4)}
            \draw (point\a\b) -- (point\r\d);
        }
        \node at (0,-2cm) {Turán(16,4)};
    \end{scope}
    %Paley
    \begin{scope}[xshift=3cm]
        \foreach \a in {0,...,12}
        {
            \fill (\a*360/13:1) circle (1pt) coordinate (point\a);
        }
        \foreach \hop in {1,3,4}
        \foreach \a in {0,...,12}
        {
            \pgfmathtruncatemacro\r{mod(\a+\hop,13)}
            \draw (point\a) -- (point\r);
        }
        \node at (0,-2cm) {Paley(13)};
    \end{scope}
    %Hamming
    \begin{scope}[xshift=5.75cm,x=.5cm,y=.5cm]
        \foreach \a in {0,...,4}
        \foreach \b in {0,...,4}
        {
            \fill (\a-2,\b-2) circle (1pt) coordinate (point\a\b);
        }
        \foreach \a in {0,...,4}
        \foreach \b in {0,...,4}
        \foreach \hop in {1,...,4}
        {
            \pgfmathtruncatemacro\r{mod(\a+\hop,5)}
            \pgfmathtruncatemacro\cond{\a<\r}
            \ifthenelse{\cond=1}
            {
                \draw (point\a\b) edge[out=-10,in=-170] (point\r\b);
            }
        }
        \foreach \a in {0,...,4}
        \foreach \b in {0,...,4}
        \foreach \hop in {1,...,4}
        {
            \pgfmathtruncatemacro\r{mod(\b+\hop,5)}
            \pgfmathtruncatemacro\cond{\b<\r}
            \ifthenelse{\cond=1}
            {
                \draw (point\a\b) edge[out=100,in=-100] (point\a\r);
            }
        }
        \node at (0,-2cm) {Hamming(5,5)};
    \end{scope}
    \end{tikzpicture}
    \end{center}
    \caption{The Turán graph, the Paley graph and the Hamming graph}
    \label{fig:peques}
\end{figure}

\subsection{Slim Fly}\label{subsec:slimfly}

Slim Fly is the name given by Besta and Hoefler~\cite{Besta} to network topologies based on the McKay--Miller--\v{S}ir\'{a}\v{n} (MMS) graphs~\cite{MMS}. The MMS is a family of graphs of diameter 2 reaching asymptotically $\frac{8}{9}$ of the vertices given by the Moore bound. When degree $\Delta=7$ is considered, the MMS graph coincides with the Hoffman--Singleton graph~\cite{Hoffman}, which is a Moore graph. Thus, for small number of vertices it is a very good option although it gets slightly worse for larger ones. Figure~\ref{fig:MMS_convergence_N}  shows how the number of vertices of the MMS graph converges to $\frac{8}{9}$ the cardinal given by the Moore bound for $k=2$. Note that the graph attaining value 1 in the ordinates is the Hoffman--Singleton graph, which is a Moore graph.

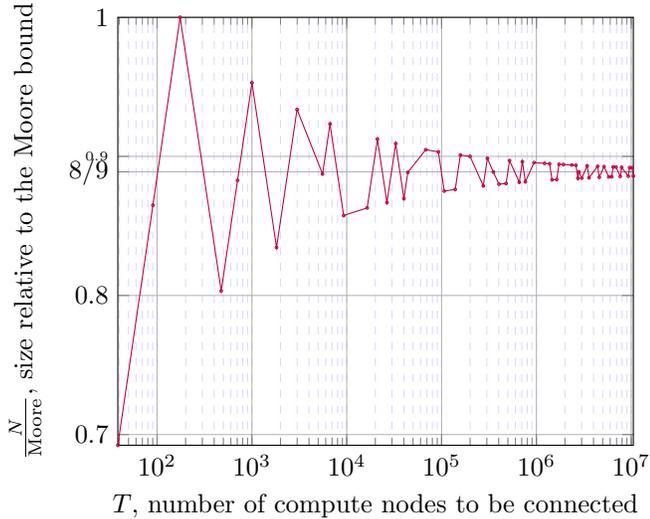
\begin{figure}
	\begin{center}
	\begin{tikzpicture}
	\begin{semilogxaxis}[
		domain=39:1e7,
		%width=.8\paperwidth,
		%height=.6\paperwidth,
		enlargelimits=false,
		xmajorgrids=true,
		ymajorgrids=true,
		xminorgrids=true,
		yminorgrids=true,
		minor y tick num=1,
		ytick={0.7,0.8,0.888888888888888,0.9,1.0},
		yticklabels={0.7,0.8,8/9,{\tiny 0.9},1},
		minor grid style={dashed,very thin, color=blue!15},
		major grid style={very thin, color=black!30},
		xlabel={$T$, number of compute nodes to be connected},
		ylabel={$\frac{N}{\text{Moore}}$, size relative to the Moore bound},
		legend style={at={(0.00,1.01)},anchor=south west,font=\scriptsize},legend columns=5,legend cell align=left,
		legend image post style={every path={nomorepostactions}},
		]
	%\def\primepowers{2, 3, 4, 5, 7, 8, 9, 11, 13, 16, 17, 19, 23, 25, 27, 29, 31, 32, 37, 41, 43, 47, 49, 53, 59, 61, 64, 67, 71, 73, 79, 81, 83, 89, 97, 101, 103, 107, 109, 113, 121, 125, 127, 128, 131, 137, 139, 149, 151, 157, 163, 167, 169, 173, 179, 181, 191, 193, 197, 199}
	%\def\primepowers{3, 4, 5, 7, 8, 9, 11, 13, 16, 17, 19, 23, 25, 27, 29, 31, 32, 37, 41, 43, 47, 49, 53, 59, 61, 64, 67, 71}
	%\foreach \n in \primepowers
	%{
	%	\pgfmathsetmacro\eps{mod(\n+1,4)-1}
	%	\pgfmathsetmacro\deg{(3*\n-\eps)/2}
	%	\pgfmathsetmacro\N{(2*\n^2}
	%	%\pgfmathsetmacro\avg{(\deg+2*(\N-\deg-1))/(\N)}
	%	\pgfmathsetmacro\avg{(\deg/100+2/100*(\N-\deg-1))/(\N/100)}% 100
	%	\pgfmathsetmacro\deglocal{\deg-\n}
	%	%\pgfmathsetmacro\localu{((\n^2+\n/2-3/2)/(3*\n^2-2*\n)) / (\deglocal/\n)}
	%	\pgfmathsetmacro\localu{(((\n/10)^2+\n/200-3/200)/(3*(\n/10)^2-2*\n/100)) / (\deglocal/\n)}% 100
	%	\pgfmathsetmacro\u{(\localu*\deglocal +\n)/\deg}
	%	%\pgfmathsetmacro\T{4/9*\n^2*(3*\n-\eps)}
	%	%\typeout{n=\n,eps=\eps,deg=\deg,N=\N,avg=\avg,deglocal=\deglocal,u=\u}
	%	%\xdef\coordinates{\coordinates (4/9*\n^2*(3*\n-\eps),\avg/\u)}
	%	%\xdef\coordinates{\coordinates (\T,\avg/\u)}
	%	\pgfmathsetmacro\Moore{\deg*\deg+1}
	%	\xdef\coordinates{\coordinates (4/9*\n^2*(3*\n-\eps),\N/\Moore)}
	%}
	%\addplot[black] {8/9};
	%\addplot[purple,mark=o,mark size=.5pt] coordinates {\coordinates };
	%From Maple file MMS_measures.mw
	\addplot[purple,mark=o,mark size=.5pt] coordinates { (38.57142857, .6923076923) (90.40000000, .8648648649) (175., 1.) (474.5263158, .8032786885) (705.4545454, .8827586207) (1001.160000, .9529411765) (1818.903226, .8344827586) (2991.756757, .9337016575) (5556.869565, .8873483536) (6658.795918, .9233226837) (9280.981818, .8574821853) (16422.68657, .8629690049) (21070.20548, .9124087591) (26520.83544, .8668252081) (32838.57647, .9091891892) (40087.42857, .8696832579) (44081.02128, .8885032538) (68060.65138, .9048248513) (92538.35537, .9032778076) (1.067178740*10^5, .8750591576) (1.392782446*10^5, .8762395875) (1.577870966*10^5, .9009380863) (1.995821338*10^5, .9000320410) (2.751780229*10^5, .8788184802) (3.040735414*10^5, .8985752234) (3.511026526*10^5, .8887924487) (4.027467638*10^5, .8800235248) (4.791578010*10^5, .8805240175) (5.207439862*10^5, .8969870392) (6.597932085*10^5, .8813726875) (7.111198382*10^5, .8961890452) (7.650415790*10^5, .8817355689) (9.430174679*10^5, .8955342001) (1.220532875*10^6, .8949871588) (1.377677246*10^6, .8947460749) (1.461070098*10^6, .8831266128) (1.637817135*10^6, .8833423347) (1.731299320*10^6, .8943168988) (1.928801024*10^6, .8941250613) (2.367745764*10^6, .8937793786) (2.610212802*10^6, .8936231055) (2.737418987*10^6, .8842168741) (2.802543246*10^6, .8888647769) (3.004096690*10^6, .8843597011) (3.435736580*10^6, .8932089659) (3.588305431*10^6, .8846206676) (4.419196357*10^6, .8928614518) (4.599431876*10^6, .8849602174) (5.169405098*10^6, .8926592547) (5.784622320*10^6, .8852497251) (6.220756024*10^6, .8853369734) (6.446811543*10^6, .8923918138) (6.915219913*10^6, .8923109031) (7.659535802*10^6, .8855753020) (7.919017987*10^6, .8921597996) (9.304635580*10^6, .8857836591) (9.599848432*10^6, .8919566102) (1.020887547*10^7, .8918943764) (1.052281765*10^7, .8859085926) };
	%\addlegendentry{MMS};
	\end{semilogxaxis}
	\end{tikzpicture}
	\end{center}
	\caption{Convergence on the number of vertices in the MMS graph to $\frac{8}{9}$ of Moore bound for diameter 2.}
	\label{fig:MMS_convergence_N}
\end{figure}

Let us now give a schematic definition of this graph based on the ideas in \cite{Hafner}. Let $q$ be a prime power other than 2.
Then, for some $\varepsilon\in\{-1,0,1\}$, $q\equiv \varepsilon\pmod 4$.
As $q$ is a prime power there is a (unique) finite field of $q$ elements, which is denoted by $\mathbb F_q$.
The set of vertices is defined as
	$$V(\mathrm{MMS}(q))=\{(s,x,y) \mid s\in\{0,1\},\ x,y\in\mathbb F_q\}.$$
Thus, MMS($q$) is a graph with $2q^2$ vertices.
In order to define the set of adjacencies a \textsl{primitive element} $\xi\in\mathbb F_q$ has to be found, that is, an element $\xi$ satisfying $\{\xi^i\mid i\in\mathbb Z\}=\mathbb F_q\setminus \{0\}$. This implies that $\xi^{q-1}=1$. Now, let us first define the sets

	$$X_0=\begin{cases}
	\{1,\xi^2,\dotsc,\xi^{q-3}\}	&\text{ if $\varepsilon=1$,}\\
	\{1,\xi^2,\dotsc,\xi^{\frac{q-1}{2}},\xi^{\frac{q+1}{2}},\dotsc,\xi^{q-2}\}	&\text{ if $\varepsilon=-1$,}\\
	\{1,\xi^2,\dotsc,\xi^{q-2}\}	&\text{ if $\varepsilon=0$,}\\
	\end{cases}$$
and $X_1=\xi X_0$.
Later it will be used that $|X_0|=\frac{q-\varepsilon}{2}$, $X_0\cup X_1=\mathbb F_q\setminus \{0\}$ and
	$$X_0\cap X_1=\begin{cases}
	\emptyset	&\text{ if $\varepsilon=1$,}\\
	\{1,-1\}	&\text{ if $\varepsilon=-1$,}\\
	\{1\}		&\text{ if $\varepsilon=0$.}\\
	\end{cases}$$
The adjacencies are defined as follows:
\begin{enumerate}
\item $(s,x,y_1)$ is adjacent to $(s,x,y_2)$ for all $s\in\{0,1\}$, $x,y_1,y_2\in\mathbb F_q$ such that $y_1-y_2\in X_s$.
\item $(0,x_1,y_1)$ is adjacent to $(1,x_2,y_2)$ for all $x_1,x_2,y_1,y_2\in\mathbb F_q$ such that $y_1-y_2=x_2x_1$.
\end{enumerate}
Thus, each vertex has $|X_0|$ incident edges by the first item and $q$ incident edges by the second item. Therefore, the degree of MMS($q$) is $\Delta=\frac{3q-\varepsilon}{2}$.
For convenience, let us call the edges by item 1), \textsl{local edges} and the edges by item 2), \textsl{global edges}.

The MMS has diameter 2. Let us study the minimum paths to prove this, and further, to count the use of local and global edges. The possible routes between two vertices could be \emph{ll}, \emph{lg}, \emph{gl} or \emph{gg}; where \emph{l} means a local edge and \emph{g} a global edge.
Let $(s_1,x_1,y_1)$ be the origin vertex and $(s_2,x_2,y_2)$ the destination.
If $s_1=s_2$ and $x_1=x_2$ then the minimum routes are \emph{ll}; this is the same that in Paley graphs. Half of the vertices $(s_1,x_1,y_m)$ can be used as the middle vertex.
If $s_1=s_2$ but $x_1\neq x_2$ then the minimum route is \emph{gg} with some middle vertex $(1-s_1,x_m,y_m)$.
The adjacency exists if $y_1-y_m=(1-2s_1)x_mx_1$ and $y_2-y_m=(1-2s_1)x_2x_m$.
Hence, the vertex in the middle is unique and can be calculated by $x_m=(1-2s_1)(y_1-y_2)/(x_1-x_2)$ and $y_m=y_1-(1-2s_1)x_mx_1$.
If $s_1=1-s_2=s$ then the minimum routes will be half of the time \emph{lg} and the other half \emph{gl}.
The equations for a middle vertex $(s,x_1,y_m)$ are $y_m=y_2+(1-2s)x_1x_2$ and $z=y_1-y_2-(1-2s)x_1x_2\in X_s$, while that for a middle vertex $(1-s,x_2,y_m)$ they are $y_m=y_1-(1-2s)x_1x_2$ and $z=y_1-y_2-(1-2s)x_1x_2\in X_{1-s}$. Thus, routing is performed by computing $z=y_1-y_2-(1-2s)x_1x_2$. If $z=0$ there is a global edge from the origin to the destination, otherwise, as $X_s\cup X_{1-s}=\mathbb F_q\setminus \{0\}$, either $z\in X_s$ or $z\in X_{1-s}$. If $z\in X_s$ use the middle vertex $(s,x_1,y_m)$ and if $z\in X_{1-s}$ use the middle vertex $(1-s,x_2,y_m)$. The uniqueness depends, therefore, in $X_s\cap X_{1-s}$; if $\varepsilon=1$ then it is always the empty set and the route is unique, otherwise there are some pairs for which there are two minimal paths.
As summary, the number of routes \emph{gg} is asymptotically the sum of the number of routes \emph{lg} plus routes \emph{gl}. Thus, 3 global links are used per each local link used.

The analysis in \cite{Besta} does not consider the link utilization and concludes that $\Delta_0=\frac{\Delta}{2}$ terminals per router are required for a full use of the network. As studied in Section~\ref{sec:model}, this would be true if all links would accept the same load. However, this is not the case in the MMS as shown next. As proved above, the number of global links is about 2 times the number of local links, but the load over the total of global links is about 3 times the load of the local links. Thus, each global link receives about $3/2$ of the load received by a local link. Hence, saturation is reached when global links receive load 1 and local links receive $2/3$.
Then, the link utilization is $u=\frac{2}{3}\cdot 1+\frac{1}{3}\cdot \frac{2}{3}=\frac{8}{9}$.\footnote{The value $8/9$ is the same that the quotient of its number of vertices to the Moore bound. This is a coincidence, it does not hold in the great majority of graphs.}

Figure~\ref{fig:MMS_convergence_u} shows this convergence of the link utilization to $\frac{8}{9}$. Again, note that this is an asymptotic behaviour; for the case $q=5$---the Hoffman--Singleton graph---all links receive the same load and the utilization is $u=1$ since it is a symmetric graph. The situation is a little worse if $\varepsilon \neq 1$, where there are non-unique minimal paths and, if the routing is deterministic, there are a few links that are used exclusively for messages between their endpoints.

\begin{figure}
	\begin{center}
	\begin{tikzpicture}
	\begin{semilogxaxis}[
		domain=39:1e7,
		%width=.8\paperwidth,
		%height=.6\paperwidth,
		enlargelimits=false,
		xmajorgrids=true,
		ymajorgrids=true,
		xminorgrids=true,
		yminorgrids=true,
		minor y tick num=1,
		ytick={0.888888888888888,0.9,0.95,1.0},
		yticklabels={8/9,0.9,0.95,1},
		minor grid style={dashed,very thin, color=blue!15},
		major grid style={very thin, color=black!30},
		xlabel={$T$, number of compute nodes to be connected},
		ylabel={$u$, average link utilization},
		legend style={at={(0.00,1.01)},anchor=south west,font=\scriptsize},legend columns=5,legend cell align=left,
		legend image post style={every path={nomorepostactions}},
		]
	%\def\primepowers{2, 3, 4, 5, 7, 8, 9, 11, 13, 16, 17, 19, 23, 25, 27, 29, 31, 32, 37, 41, 43, 47, 49, 53, 59, 61, 64, 67, 71, 73, 79, 81, 83, 89, 97, 101, 103, 107, 109, 113, 121, 125, 127, 128, 131, 137, 139, 149, 151, 157, 163, 167, 169, 173, 179, 181, 191, 193, 197, 199}
	%\def\primepowers{3, 4, 5, 7, 8, 9, 11, 13, 16, 17, 19, 23, 25, 27, 29, 31, 32, 37, 41, 43, 47, 49, 53, 59, 61, 64, 67, 71}
	%\foreach \n in \primepowers
	%{
	%	\pgfmathsetmacro\eps{mod(\n+1,4)-1}
	%	\pgfmathsetmacro\deg{(3*\n-\eps)/2}
	%	\pgfmathsetmacro\N{(2*\n^2}
	%	%\pgfmathsetmacro\avg{(\deg+2*(\N-\deg-1))/(\N)}
	%	\pgfmathsetmacro\avg{(\deg/100+2/100*(\N-\deg-1))/(\N/100)}% 100
	%	\pgfmathsetmacro\deglocal{\deg-\n}
	%	%\pgfmathsetmacro\localu{((\n^2+\n/2-3/2)/(3*\n^2-2*\n)) / (\deglocal/\n)}
	%	\pgfmathsetmacro\localu{(((\n/10)^2+\n/200-3/200)/(3*(\n/10)^2-2*\n/100)) / (\deglocal/\n)}% 100
	%	\pgfmathsetmacro\u{(\localu*\deglocal +\n)/\deg}
	%	%\pgfmathsetmacro\T{4/9*\n^2*(3*\n-\eps)}
	%	%\typeout{n=\n,eps=\eps,deg=\deg,N=\N,avg=\avg,deglocal=\deglocal,u=\u}
	%	%\xdef\coordinates{\coordinates (4/9*\n^2*(3*\n-\eps),\avg/\u)}
	%	%\xdef\coordinates{\coordinates (\T,\avg/\u)}
	%	\pgfmathsetmacro\Moore{\deg*\deg+1}
	%	\xdef\coordinates{\coordinates (4/9*\n^2*(3*\n-\eps),\N/\Moore)}
	%}
	%\addplot[black] {8/9};
	%\addplot[purple,mark=o,mark size=.5pt] coordinates {\coordinates };
	%From Maple file MMS_measures.mw
	\addplot[purple,mark=o,mark size=.5pt] coordinates { (38.57142857, .8571428571) (90.40000000, .9416666667) (175., 1.) (474.5263158, .8803827751) (705.4545454, .9185606061) (1001.160000, .9507692308) (1818.903226, .8842504744) (2991.756757, .9317211949) (5556.869565, .9044384058) (6658.795918, .9216326531) (9280.981818, .8865203762) (16422.68657, .8869936034) (21070.20548, .9111440207) (26520.83544, .8873108984) (32838.57647, .9080711354) (40087.42857, .8875379939) (44081.02128, .8968306738) (68060.65138, .9039199333) (92538.35537, .9024522422) (1.067178740*10^5, .8879466990) (1.392782446*10^5, .8880332354) (1.577870966*10^5, .9002361833) (1.995821338*10^5, .8993791825) (2.751780229*10^5, .8882182986) (3.040735414*10^5, .8980025499) (3.511026526*10^5, .8929002193) (4.027467638*10^5, .8883029006) (4.791578010*10^5, .8883376888) (5.207439862*10^5, .8965036148) (6.597932085*10^5, .8883962096) (7.111198382*10^5, .8957511745) (7.650415790*10^5, .8884210526) (9.430174679*10^5, .8951340616) (1.220532875*10^6, .8946187806) (1.377677246*10^6, .8943917626) (1.461070098*10^6, .8885152884) (1.637817135*10^6, .8885297611) (1.731299320*10^6, .8939877301) (1.928801024*10^6, .8938071743) (2.367745764*10^6, .8934818873) (2.610212802*10^6, .8933348626) (2.737418987*10^6, .8885880451) (2.802543246*10^6, .8909045048) (3.004096690*10^6, .8885975048) (3.435736580*10^6, .8929453158) (3.588305431*10^6, .8886147461) (4.419196357*10^6, .8926185318) (4.599431876*10^6, .8886370962) (5.169405098*10^6, .8924284353) (5.784622320*10^6, .8886560784) (6.220756024*10^6, .8886617857) (6.446811543*10^6, .8921770438) (6.915219913*10^6, .8921009985) (7.659535802*10^6, .8886773443) (7.919017987*10^6, .8919589935) (9.304635580*10^6, .8886909084) (9.599848432*10^6, .8917680641) (1.020887547*10^7, .8917095911) (1.052281765*10^7, .8886990248)};
	%\addlegendentry{MMS};
	\end{semilogxaxis}
	\end{tikzpicture}
	\end{center}
	\caption{Convergence on the link utilization in the MMS graph to $\frac{8}{9}$.}
	\label{fig:MMS_convergence_u}
\end{figure}
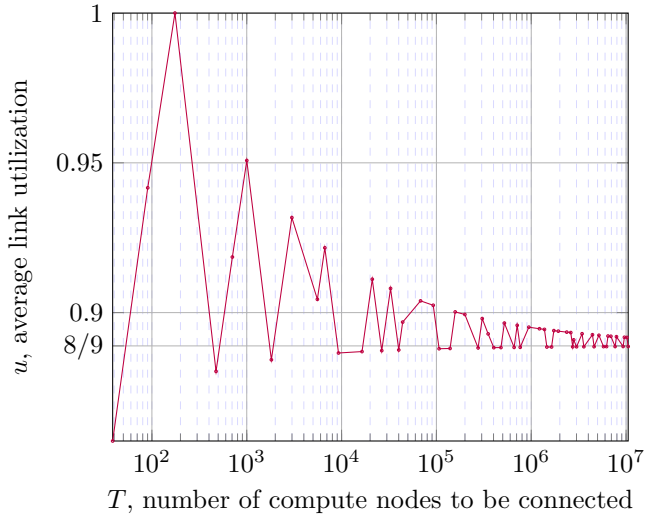

\subsection{Projective Networks of Higher Average Distance} \label{subsec:grandes}

In Section~\ref{sec:projective} two projective networks of average distances 2 and 2.5 were presented. There are also graphs based on projective spaces which attain the bounds for greater average distances. In this subsection they are enumerated. They are not described in a great detail since such an amount of terminal nodes is beyond the horizon of current network topologies.

The incidence graph over a generalized quadrangle or hexagon, instead of the projective plane, results in a generalized Moore graph with average distance tending to 3.5 and 5.5 respectively~\cite{Exoo}. Alike happens to $G_q$, generalized quadrangles and hexagons exist whenever $q$ is a prime power. Their number of vertices is the double of the number of points in their spaces, respectively $P_3(\mathbb F_q)$ and $P_5(\mathbb F_q)$.

Furthermore, these graphs allow for a modification similar to $\overline{G}_q$, as it was proved by Delorme~\cite{Delorme_diametro3}. In the case of quadrangles the resulting average distance tends to 3 and for hexagons it tends to 5. In both cases, the number of vertices is asymptotically close to the Moore bound. However $q$ must be an odd power of 2. Hence, they exist only for a very reduced amount of sizes. Otherwise, Delorme's graph on quadrangles, that is the modified incidence graph on the quadrangles over $P_3(\mathbb F_q)$, would have been a very good alternative to current dragonfly topology.
These graphs are denoted as \firstuse{Delorme's graph} in the remainder of the paper. By default this notation will refer to the construction using generalized quadrangles, unless specified otherwise.

\subsection{Random Graphs}\label{subsec:random}

\firstuse{Random graphs}~\cite{Bollobas} have been proposed for interconnection networks of datacenters~\cite{Jellyfish} and HPC~\cite{Koibuchi_random}. Since not many generalized Moore graphs are known, random graphs might constitute an alternative when specific constructions are not known. There are three major different models to define a random graph with $N$ vertices. Each one of these models requires a different additional parameter: a probability $p$ of each edge (the \textsl{binomial model}), a total number of edges $M$ (the \textsl{uniform model}), or a constant degree $\Delta$. Although they are very similar when $\Delta=p(N-1)=2M/N$, the three models are pairwise different. Nevertheless, for all our purposes the approximations work equally fine indistinctly of which model is chosen. The average distance is approximately $\bar k\approx \frac{\log T}{\log R}-1$ which is close to the Moore bound for all $\bar k$, although worse than the values for specific known constructions. Thus, random graphs could be used if there is no appropriate construction for the desired dimension. Furthermore, the link utilization in random graphs is a delicate aspect. If all terminals generate the same amount of traffic, then experimentally we have obtained an utilization of $u\approx 0.8$ (depending on the model), lower than all the topologies considered in this paper.
%However, (almost every) random graphs are not vertex-transitive. Thus, depending on the application, some terminals could generate more traffic than others, obtaining a greater link utilization.

%
\section{Comparison of the Topologies}\label{sec:comparative}

% Later in Subsection \ref{subsec:comparison} a complete review of the structural parameters of the previous topologies and two comparisons in terms of scalability and cost will be presented.

In this section a comparison of the topologies presented in previous Sections \ref{sec:projective} and \ref{sec:topologies} is done in terms of the cost model presented in Section \ref{sec:model}. The section is divided into three subsections. The first one considers the complete picture of all the networks with diameters from 1 to 6. In such subsection also other topologies such as the dragonfly~\cite{Kim_dgfly_ISCA}, 3D Hamming graph and Hypercube are also considered as useful references. The second subsection is focused on a detailed comparison among projective networks and Slim Fly. Finally, the third subsection considers different implementations for two specific numbers of compute nodes, which are 10,000 and 25,000.

\subsection{General Comparison}\label{subsec:comparison}

\begin{table}
	\begin{center}\scriptsize
	%\hspace{-3em}
	\begin{tabular}{|lccc|}
	%\begin{tabular}{|p{30ex}cccc|}
	\hline
	\rule{0pt}{2.5ex}Graph									&	$k$		&	$\lim_{N\rightarrow \infty}\bar k$	&	$\lim u$	\\
	\hline
	Complete graph $K_N$									&	1		&	1									&	1			\\
\hline
	Turán($N$,$r$)											&	2		&	$1+\frac{1}{r}$						&	1			\\
	Complete bipartite graph $K_{n,n}$						&	2		&	1.5									&	1			\\
	Hamming graph 2D										&	2		&	2									&	1			\\
	Demi-projective network $\overline{G}_q$				&	2		&	2									&	1			\\
	Slim Fly MMS for $q=4w+\varepsilon$						&	2		&	2									&	$8/9$		\\
\hline
	Projective network $G_q$								&	3		&	2.5									&	1			\\
	Dragonfly												&	3		&	3									&	1			\\
	Delorme's graph on quadrangles							&	3		&	3									&	1			\\
	Hamming graph 3D										&	3		&	3									&	1			\\
\hline
	Incidence graph of generalized quadrangles				&	4		&	3.5									&	1			\\
	Delorme's graph on hexagons								&	5		&	5									&	1			\\
	Incidence graph of generalized hexagons					&	6		&	5.5									&	1			\\
	Random graph with $N$ vertices							&	\hbox to 0pt{\hss$\sim\frac{\log(N)}{\log{\Delta}}$\ \hss}	&	$\sim\frac{\log(N)}{\log{\Delta}}$	&	$\approx$0.8\\
\hline
	Hypercube $C_2^n$										&	$n$		&	$n/2$								&	1			\\
	\hline
	\end{tabular}
	
	\end{center}
	\caption{Topological parameters of optimal topologies and some references.}
	\label{tbl:topologies}
\end{table}

\begin{table*}
	\begin{center}\scriptsize
	%\rotatebox{-90}{%
	%\hspace{-3em}
	\renewcommand\arraystretch{1.1}
	\begin{tabular}{|lccccc|}
	\hline
	Graph																&	$T$							&	$R$						&	$N$				&	$\Delta$				&	$\Delta_0$				\\
	\hline
	Complete graph $K_N$												&	$N^2$						&	$2N-1$					&	$N$				&	$N-1$					&	$N$						\\
	\hline
	Turán($N$,$r$)														&	$N^2\frac{r-1}{r+1}$		&	$N\frac{(r-1)(2r+1)}{r(r+1)}$&	$N$			&	$N\frac{r-1}{r}$			&	$N\frac{r-1}{r+1}$		\\
	Complete bipartite graph $K_{n,n}$									&	$4n^2/3$					&	$5n/3$					&	$2n$			&	$n$						&	$2n/3$					\\
	Hamming graph 2D of side $n$									&	$n^3$						&	$3n-2$					&	$n^2$			&	$2(n-1)$				&	$n$						\\
	Demi-projective network $\overline{G}_q$								&	$q^3/2+q^2+q+1/2$			&	$3(q+1)/2$				&	$q^2+q+1$		&	$q+1$					&	$(q+1)/2$				\\
	Slim Fly MMS for $q=4w+\varepsilon$											&	$4/9q^2(3q-\varepsilon)$	&	$13/18(3q-\varepsilon)$	&	$2q^2$			&	$(3q-\varepsilon)/2$	&	$2/9(3q-\varepsilon)$		\\
	\hline
	Projective network $G_q$										&	$4/5(q^3+2q^2+2q+1)$		&	$7(q+1)/5$				&	$2(q^2+q+1)$	&	$q+1$					&	$2(q+1)/5$				\\
	Dragonfly with $h$ global links per router							&	$4h^4+2h^2$					&	$4h-1$					&	$4h^3+2h$		&	$3h-1$					&	$h$						\\
	Delorme's graph on generalized quadrangles			    &	$(q+1)^2(q^2+1)/3$			&	$4/3(q+1)$				&	$q^3+q^2+q+1$	&	$q+1$					&	$(q+1)/3$				\\
	Hamming graph 3D of side $n$					&	$n^4$						&	$4n-3$					&	$n^3$			&	$3(n-1)$				&	$n$						\\
	\hline
	Incidence graph of generalized quadrangles							&	$4/7(q+1)^2(q^2+1)$			&	$9/7(q+1)$				&	$2(q^3+q^2+q+1)$&	$q+1$					&	$2(q+1)/7$				\\
	Delorme's graph on generalized hexagons				&	$1/5(q^4+q^2+1)(q+1)^2$		&	$6/5(q+1)$				&	$q^5+\dotsb+q+1$&	$q+1$					&	$(q+1)/5$				\\
	Incidence graph of generalized hexagons								&	$4/11(q^4+q^2+1)(q+1)^2$	&	$13/11(q+1)$			&	$2(q^5+\dotsb+q+1)$&	$q+1$				&	$2(q+1)/11$				\\
	Random graph with $N$ vertices and degree $\Delta$					&	$\Delta\log(\Delta)N/\log(N)$	&	$\Delta(1+\frac{\log \Delta}{\log N})$	&	$N$		&	$\Delta$				&	$\sim \frac{\Delta\log{\Delta}}{\log{N}}$	\\
	\hline
	Hypercube $C_2^n$													&	$2^{n+1}$					&	$n+2$					&	$2^n$			&	$n$						&	2						\\
	\hline
	\end{tabular}
	%}
	\end{center}
	\caption{Structural parameter of optimal known topologies and some references.}
	\label{tbl:structural}
\end{table*}

\foreach \R in {64}
{
\begin{figure*}[ht]
	\begin{center}
	\newdimen\crossx
	\includegraphics{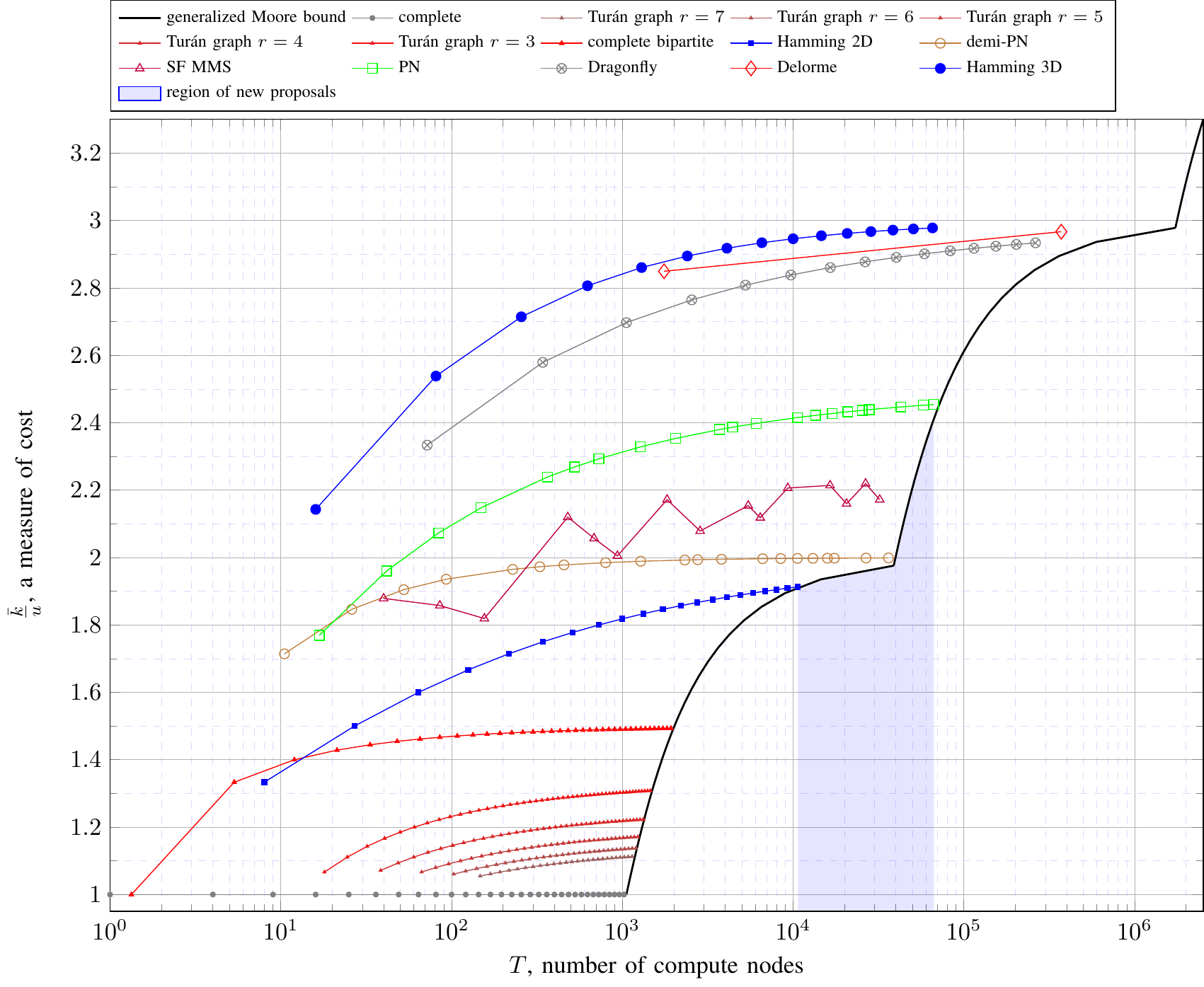}
	\end{center}
	\caption{The measure of cost $\bar k/u$ in realizations of topologies with a given number of compute nodes using routers with maximum radix \R.}
	\label{fig:cost}
\end{figure*}
}

\begin{figure*}[ht]
	\begin{center}
	\newdimen\crossx
	\includegraphics{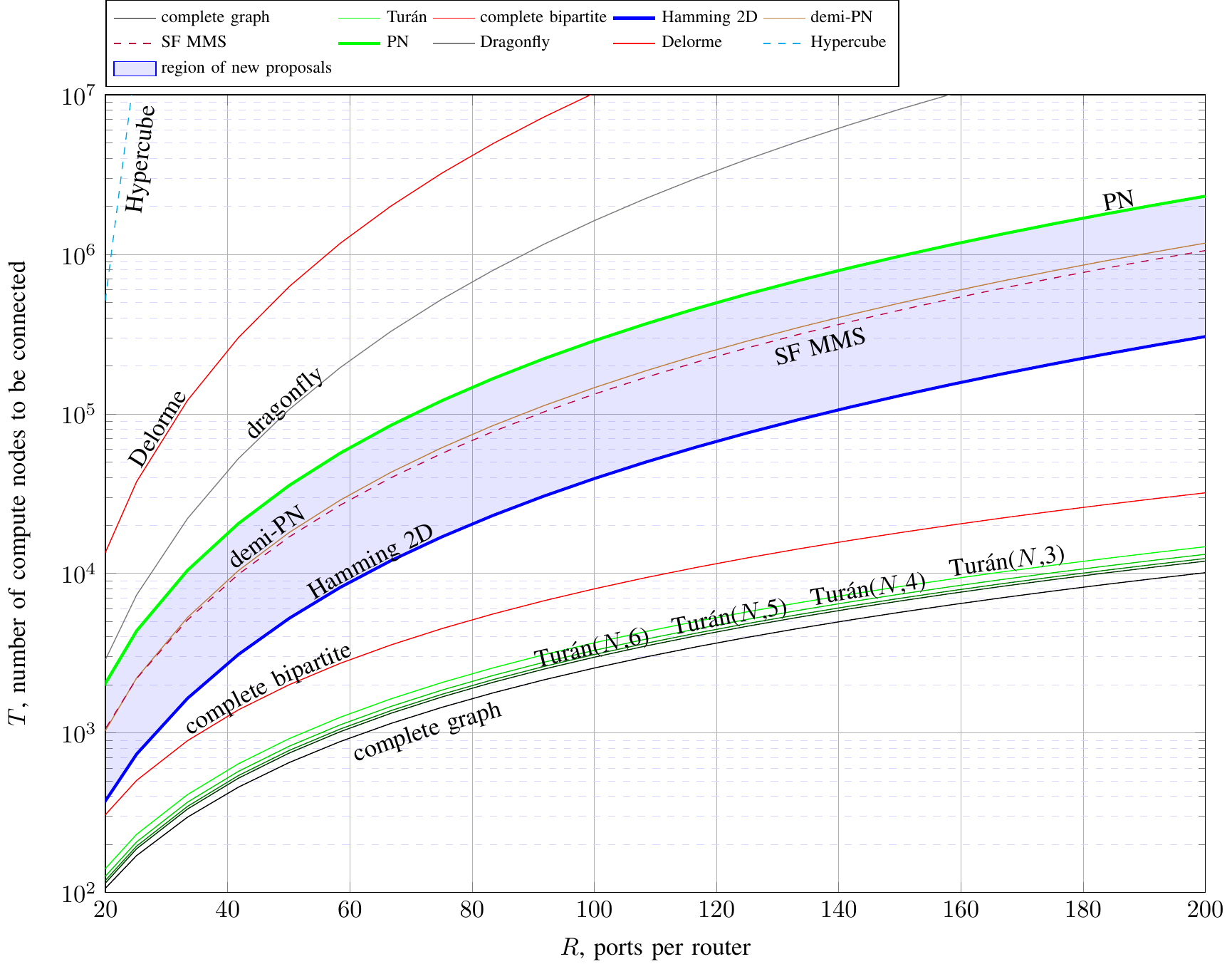}
	\end{center}
	%\caption{Map with the best network given number of terminals and router radix.}
    \caption{Scalability of the different topologies.}
	\label{fig:map}
\end{figure*}

Table~\ref{tbl:topologies} summarizes the fundamental parameters of the graphs presented in Section \ref{sec:topologies}: the diameter and the limit values of average distance and utilization.
Table~\ref{tbl:structural} contains the parameters relevant to a network implementing the topology.
Both tables present these values for the optimal graphs, other graphs which are close to be optimal and other graphs, such as the hypercube, to take as a reference.

Figure~\ref{fig:cost} illustrates the cost of networks implementing different topologies using routers with at most 64 ports.
Other values of $R$ give similar plots.
The thick black curve is the average distance corresponding to an ideal generalized Moore graph with $u=1$ (like Equation~\eqref{eq:final}), which is a lower bound for the values of the other curves.
Each other curve corresponds to a topology, which is build for all possible radix up to 64.
The value of $\Delta_0$ has been tried to be a natural number, but sometimes this condition has been relaxed to avoid under/over-subscription, which would distort the figure.
The ordinates axis shows the value $\bar k/u$ which, according to Equation~\eqref{eq:cost}, is a measure of cost associated to the topology. Thus, curves that attain the bound are the optimal topologies, which are: the complete graph, the Turán graphs, the 2D Hamming graph, demi-PN, PN and Delorme's graph $P_3(\mathbb F_q)$. Note that $P_3(\mathbb F_q)$
intersects the curve in the limit. However, it only exists when $\Delta-1$ is a odd power of 2 which means that there are only two points in the range $R\leq 64$.
The MMS graph does not attain the bound because of its asymmetry; as we have seen in previous sections, the MMS has $u=8/9$ in the limit. Hence, the curve is about $9/8$ the one of demi-PN.
For greater average distances the dragonflies do scale very well, although not attaining the bound. As it can be observed the 3D Hamming graph is completely superseded by the dragonfly.

Figure~\ref{fig:map} indicates which topologies are realizable for a given number of terminals $T$ and available router radix $R$. It holds that solid lines are sorted by average distance. Hence, the optimal topology is the solid line immediately above the desired $(R,T)$ point.

\subsection{Projective Networks vs Slim Fly}\label{subsec:discu}

This subsection explains in more detail the advantages of PN and demi-PN with respect to the SF MMS in the design of new high scale interconnection networks. It will be shown that link utilization is an important parameter in the network cost model. For this explanation, Figure~\ref{fig:cost_Brown} will be used. In this figure both curves $\bar{k}$ and $\frac{\bar{k}}{u}$ for the three topologies PN, demi-PN and SF MMS are shown. Note that for PN both curves coincide since the graphs $G_q$ are symmetric, as it has been proved in Theorem~\ref{thm:symmetry}.

Clearly, if only average distance is considered, the smaller cost is given by SF MMS. However, its maximum size is $\frac{8}{9}$ smaller than the possible one, which is attained by the demi-PN construction. The reader should notice that the abscises axis is logarithmic, therefore this difference seems smaller in the figure. However, if the link utilization is considered in the network cost model, for more than $1000$ compute nodes demi-PN exhibits as the best alternative both in scalability and cost.

Finally, PN is an alternative to scale to a larger amount of compute nodes reaching almost $10^5$ compute nodes with the minimum cost.

\begin{figure}[]
	\begin{center}
	\includegraphics[]{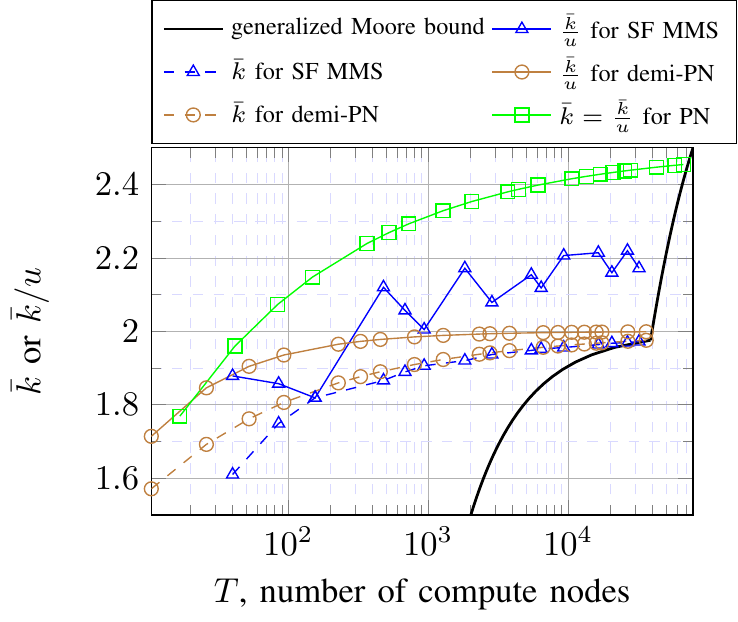}
	\end{center}
	\caption{The measure of cost $\bar k/u$ and $\bar k$ given a number of terminals for SF~MMS, PN and demi-PN. Using routers with maximum radix 64.}
	\label{fig:cost_Brown}
\end{figure}

\subsection{Cases of Use}\label{subsec:cases_of_use}

To exemplify the use of the topologies, in this subsection different specific networks that connect a given amount of compute nodes are shown.
Two approximate network sizes have been selected: 10,000 compute nodes and 25,000 compute nodes.
Even for the small case of $T\approx 10,000$, the complete graph would require a router radix of about $R\approx 200$, which is currently unrealistic.
Hence, the topologies to be considered will be the Hamming graph, the demi-PN, the SF MMS, the PN and the dragonfly.
Tables~\ref{tbl:10000} and \ref{tbl:25000} show the network parameters for each of the selected topologies in the small and large cases, respectively.

The calculations assume that nodes are arranged into fully electrical groups and cables outside them are optical.
%These groups are the closest possible to 500 compute nodes, while trying to be multiples of the ideal groups seen in Subsection~\ref{sub:layout} when possible.
These groups are the closest possible to 500 compute nodes, while trying to maximize the connections inside a group.
An electrical group size marked with asterisk in the tables indicates the size for most electrical groups, with a few smaller groups.
%For example, in the dragonfly of $h=7$ most electrical groups are 5 ideal groups, while there is a unique electrical group of 4 ideal groups.
%In the Hamming $K_{29}^2$, the ideal groups are rows, with 841 terminals each; hence the electrical groups are half rows (406 or 435 terminals).

For a fair comparison, we have employed the cost models from~\cite{Besta} using speeds of 40~Gbps, avoiding the extra costs of 100G routers and cables which are still in their market introduction stage.
%The pricing of cables and routers follows \cite{Besta}.
An average intra-rack distance of 1m is assumed, from which it is obtained a price of 0.985\$/Gbps for the average electrical cable.
%Assuming an average inter-rack distance of 1m it is obtained an average price of \textbf{TODO: continue this sentence?}
%The price of electrical cables is then $1\cdot 0.4079 + 0.5771=0.9850$\$/Gbps.
%\textbf{TODO: One of the next calculations is wrong! Copy-paste error?}
The average length of the optical inter-rack cables is approximately the average distance of a mesh of same dimensions plus 2m of overhead.
%For optical cables in the 10,000 case the cost is $5.1\cdot 0.0919+7.2745=7.7432$\$/Gbps.
%For optical cables in the 25,000 case the cost is $7\cdot 0.0919+7.2745=7.9178$\$/Gbps.
In the 10,000 nodes case, an average cost per optical cable of 7.7432\$/Gbps is computed, and in the 25,000 case of 7.9178\$/Gbps.
%The price of cables is then determined by the link speed, which is set to 40~Gbps.
The cost per router is modelled as $350.4 R-892.3$ \$/router.
The only power considered is the consumed by the SerDes, which is approximated to 2.8 watts per port.

Tables~\ref{tbl:10000} and ~\ref{tbl:25000} show cost and power per node for the topologies studied. The lowest cost and power are obtained in both cases with a 2D Hamming graph. However, its required switches exceed the current limit of 48 available ports, so it could only be built with either slower links or using multi-chip switches with higher latency, as discussed in Section~\ref{sec:model}. Next, we consider designs realizable with full speed and a single switch chip per router. %For realizable designs at full speed
With $T\approx 10,000$ nodes, the demi-PN provides the lowest cost and power, 1\% and 7\% respectively lower than SF MMS. For $T\approx 25,000$, a diameter 3 network is required using switches up to 48 ports. In this case, the PN provides the lowest power, 10.9\% less than the dragonfly. A layout of a projective network requires more optical cables when compared with SF MMS or dragonfly, so in this case the cost of the dragonfly is 2.6\% lower because of its reduced number of optical cables.
Note that, for an all-optical system such as PERCS \cite{Arimilli}, projective networks provide significantly better power and cost per node than the alternatives in the tables.
%. This is manifested in the obtained cost per node values in Tables \ref{tbl:10000},~\ref{tbl:25000}. Nevertheless, it is irrelevant for power consumption, which follows exactly the model.

%---------------------

%See Tables~\ref{tbl:10000} and \ref{tbl:25000}.
%The electrical groups are multiple of the ideal groups seen in Subsection~\ref{sub:layout} when possible.
%Electrical group marked with asterisk means size for most electrical groups, with a few smaller groups.
%For example, in the dragonfly of $h=7$ most electrical groups are 5 ideal groups, while there is a unique electrical group of 4 ideal groups.
%In the Hamming $K_{29}^2$, the ideal groups are rows, with 841 terminals each; hence the electrical groups are half rows (406 or 435 terminals).

%Assume an average distance inter-rack of 1m. For the average distance inter-rack add 2 meters to the average distance of the appropriate mesh. Results 5 meters for the 10000 case and 6.8 meters for the 25000 case. Ahora obtengo 3.1+2=5.1 y 5+2=7
%The price of electrical cables is then $1\cdot 0.4079 + 0.5771=0.9850$\$/(Gbps).
%For optical cables in the 10000 case is $5\cdot 0.0919+7.2745=7.7340$.
%For optical cables in the 25000 case is $5\cdot 0.0919+7.2745=7.89942$.
%A cost per router of $350.4\cdot R-892.3$\$.
%The transference rate is 40Gb/s.

% Mellanox:
%There are electrical cables of 10Gbps up to 7 m.
%There are electrical cables of 40Gbps up to 5 m.
%There are electrical cables of 100Gbps up to 3 m.
%For greater speeds cables in a rack have to be optical.

% see LeviProj_power.mw and Brown_partitions.mw
\begin{table*}
	\begin{center}
	\begin{tabular}{|c|ccccc|}
	\hline
	Topology					&	Hamming $K_{22}^2$	&	demi-PN(27)	&	SF MMS(19)	&	PN(23)	&	dragonfly(7)\\
	\hline
	T							&	10648				&	10598		&	9386		&	9954		&	9702\\
	\textbf{R}							&	\textbf{64}					&	\textbf{42}			&	\textbf{42}			&	\textbf{33}			&	\textbf{27}\\
	N							&	484					&	757			&	722			&	1106		&	1386\\
	$\Delta_0$					&	22					&	14			&	13			&	9			&	7\\
	subscription				&	1.002				&	0.999		&	0.991		&	0.921		&	0.994\\
	Size of electrical group	&	484					&	504*		&	494			&	396*		&	490*\\
	Number of groups			&	22					&	22			&	19			&	26			&	20\\
	Electrical cables			&	5082				&	556			&	3971		&	1907		&	8926\\
	Optical cables				&	5082				&	10028		&	6498		&	11365		&	4514\\
	\hline
	\textbf{Cost per node (\$)}	&	\textbf{1145.41}	&\textbf{1282.59}&\textbf{1294.51}	&	\textbf{1546.83}		&	\textbf{1404.42}\\
	\textbf{Power per node (W)}				&	\textbf{8.15}				&	\textbf{8.40}		&	\textbf{9.05}		&	\textbf{10.27}		&	\textbf{10.80}\\
	\hline
	\end{tabular}
	\end{center}
	\caption{Example networks with about 10,000 compute nodes and electrical groups of about 500 nodes.}
	\label{tbl:10000}
\end{table*}

\begin{table*}
	\begin{center}
	\begin{tabular}{|c|ccccc|}
	\hline
	Topology					&	Hamming $K_{29}^2$	&	demi-PN(37)	&	SF MMS(27)	&	PN(31)		&	dragonfly(9)\\
	\hline
	T							&	24389				&	26733		&	26244		&	25818		&	26406\\
	\bf R						&	\bf 85				&	\bf 57		&	\bf 59		&	\bf 45		&	\bf 35\\
	N							&	841					&	1407		&	1458		&	1986		&	2934\\
	$\Delta_0$					&	29					&	19			&	18			&	13			&	9\\
	subscription				&	1.001				&	0.999		&	0.976		&	1.003		&	0.996\\
	Size of electrical group	&	435*				&	532*		&	486			&	520*		&	486*\\
	Number of groups			&	58					&	51			&	54			&	51			&	55\\
	Electrical cables			&	5684				&	620			&	10935		&	3381		&	25101\\
	Optical cables				&	17864				&	26094		&	18954		&	28395		&	13041\\
	\hline
	\bf Cost per node (\$)		&	\bf 1237.43			&	\bf 1314.29	&	\bf 1344.11	&	\bf 1497.77	&	\bf 1457.39\\
	\bf Power per node (W)		&	\bf 8.21			&	\bf 8.40	&	\bf 9.18	&	\bf 9.70	&	\bf 10.89\\
	\hline
	\end{tabular}
	\end{center}
	\caption{Example networks with about 25,000 compute nodes and electrical groups of about 500 nodes.}
	\label{tbl:25000}
\end{table*}

\section{Indirect Networks}\label{sec:indirect}

Previous sections of this paper have studied direct networks, giving general bounds on the number of nodes for optimal topologies. Moreover, topologies that are close to these bounds have also been studied. However, indirect topologies are popular in the industry. For example, Clos networks have a widespread use since more or less half of current supercomputers on the Top 500 list are using them \cite{top500}. Hence, in this section it is explored how the cost model presented in this paper could be adapted to indirect networks. Moreover, the cost-optimal diameter 2 indirect network, which is the Two-Level Orthogonal Fat Tree~\cite{Valerio}, can also be obtained using the incidence graph of a projective plane. Hence, in this section it is also illustrated how the previous theoretical graph models for obtaining optimal direct networks can also be applied when dealing with indirect networks.

A \textsl{indirect network} has two types of routers since one router may or may not host compute nodes. Therefore, there are \textsl{spine routers}, which are connected only to other routers and \textsl{leaf routers}, which are also connected to compute nodes. Typically, all routers use the same hardware, so it can be assumed that every router has the same radix $R$. In addition, it will be assumed that all leaf routers have the same number $\Delta_0$ of attached compute nodes. Therefore, the graph defined by the routers has two kind of vertices: leaf vertices of degree $\Delta$ and spine vertices with degree $R$, which clearly implies that it cannot be vertex-transitive. Note that the relation $R=\Delta+\Delta_0$ considered for direct networks still holds in the case of indirect networks. In the following, the number of leaf routers will be denoted by $L$ and the number of spine routers by $S$. Thus, the total number of routers will be $N=L+S$.

When considering the graph model to study indirect networks, the main difference with the direct case lies on the diameter and average distance calculation.
In this case, the distances of interest are the ones between leafs, so that a great distance between some leaf and some spine routers becomes irrelevant.
%Thus, the diameter of an indirect network is the greatest of the distance among leafs, while the average distance is the average of those distances.
Thus, instead of the diameter, the maximum distance among leafs is considered; and instead of average distance, the average distance between leafs, still denoted by $\bar k$.
In the remainder of the section it will be shown how the graph theoretical techniques presented in previous sections can be used to infer indirect network topologies with good properties.

A first example considers the indirect topology presented by Fujitsu in~\cite{Fujitsu}. This topology, denoted as Multi-layer Full-Mesh (MLFM), can be obtained from the incidence graph of a complete graph $K_n$. To explain this construction let us refer to Figure~\ref{fig:Fujitsu}. In this figure, the network is constructed using the incidence graph of $K_4$. In Figure~\ref{fig:Fujitsu} a) a standard representation of the incidence graph of $K_4$ is shown. The square shaped vertices are the vertices of the complete graph and the circle shaped are the vertices representing the incidence. For example, since there is a edge joining vertices $a$ and $b$ in $K_4$, vertex $A$ is adjacent to both in the incidence graph. In Figure~\ref{fig:Fujitsu} b) a different representation of this graph is shown, where vertices on the bottom are the vertices in $K_4$ and the vertices on the top are edges. Thus, the upper vertices will correspond to spine routers and the bottom vertices with leaf routers. Finally, in order to equalize the radix of the routers, leafs are replicated and compute nodes are added, as represented in Figure~\ref{fig:Fujitsu} c). In general, such a configuration can be obtained from any $K_n$, thus obtaining a indirect network topology with $\binom {n} {2}$ spine routers and $n(n-1)$ leaf routers, each one connected to $n-1$ compute nodes. Therefore, $\Delta = n-1$, $\Delta _0 = n-1$ and $R = 2 \Delta$. However, as it will be shown next, this topology is far from being the cost-optimal one among all the indirect topologies of diameter 2.

%%%%%%%%%%%%%%% FUJITSU %%%%%%%%%%%%%%%%
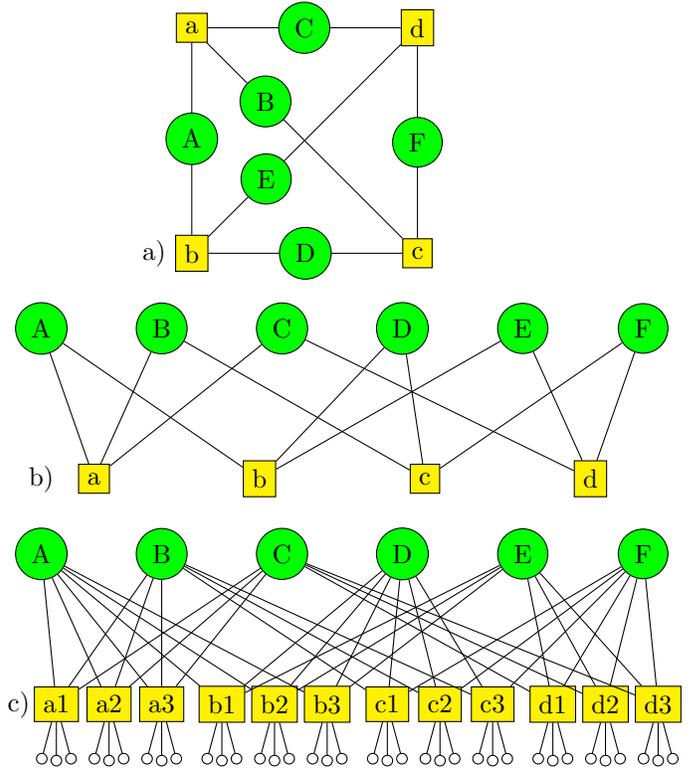
\begin{figure}
    \begin{center}
    \begin{tikzpicture}[
            up node/.style={draw,fill=green,circle},
            bottom node/.style={draw,fill=yellow},
        ]
    %As incidence graph
    \begin{scope}[shift={(2,0)}]
    %    \node (2,0) {As incidence graph};
        \node[bottom node] at (0,3) (vertexa) {a};
        \node[bottom node] at (0,0) (vertexb) {b};
        \node[bottom node] at (3,0) (vertexc) {c};
        \node[bottom node] at (3,3) (vertexd) {d};
        \draw (vertexa) --node[up node]{A} (vertexb);
        \draw (vertexa) --node[up node,pos=0.3]{B} (vertexc);
        \draw (vertexa) --node[up node]{C} (vertexd);
        \draw (vertexb) --node[up node]{D} (vertexc);
        \draw (vertexb) --node[up node,pos=0.3]{E} (vertexd);
        \draw (vertexc) --node[up node]{F} (vertexd);
        \node at (-.5,0) {a)};
    \end{scope}
    %As tree
    \begin{scope}[shift={(0,-3)}]
        %\node {As tree};
        \foreach \a/\av in {A/0,B/1,C/2,D/3,E/4,F/5}
        {
            \path (\av*1.6,2) node[up node] (up\a) {\a};
        }
        \foreach \b/\bv in {a/0,b/1,c/2,d/3}
        {
            \pgfmathsetmacro\x{.7+\bv*2.2}
            \path (\x,0) node[bottom node] (bottom\b) {\b};
        }
        \foreach  \a/\b in {A/a,A/b,B/a,B/c,C/a,C/d,D/b,D/c,E/b,E/d,F/c,F/d}
        {
            \draw (up\a) -- (bottom\b);
        }
        \node at (0,0) {b)};
    \end{scope}
    %As tree
    \begin{scope}[shift={(0,-6)}]
        %\node {As tree};
        \foreach \a/\av in {A/0,B/1,C/2,D/3,E/4,F/5}
        {
            \path (\av*1.6,2) node[up node] (up\a) {\a};
        }
        \foreach \b/\bv in {a/0,b/1,c/2,d/3}
        \foreach \i in {1,2,3}
        {
            \pgfmathsetmacro\x{\bv*2.2+(\i-1)*0.7+0.2}
            \path (\x,0) node[bottom node] (bottom\b\i) {\b\i};
            \foreach \angle in {-15,0,15}
            {
                %\draw (bottom\b\i) -- +(\angle-90:.75) circle (2pt);
                \draw (bottom\b\i) -- +(\angle-90:.75) coordinate (x);
                \draw[fill=white] (x) circle (2pt);
            }
        }
        \foreach  \a/\b in {A/a,A/b,B/a,B/c,C/a,C/d,D/b,D/c,E/b,E/d,F/c,F/d}
        \foreach \i in {1,2,3}
        {
            \draw (up\a) -- (bottom\b\i);
        }
        \node at (-.3,0) {c)};
    \end{scope}
    \end{tikzpicture}
    \end{center}
    \caption{Incidence graph of $K_4$ and the Fujitsu network.}
    \label{fig:Fujitsu}
\end{figure}

%%% CARMEN: Que pasa con el modelo en general? No viable en k>3 por dureza de Moore en irregulares

An analysis for cost and power optimization as the one done in Section~\ref{sec:model} would be pleasing. Unfortunately, it is unfeasible due to, among other reasons, the hardness of calculating Moore bounds on irregular graphs.
Nevertheless, it is possible to infer a similar formula when it is assumed that the maximum distance between leaf routers is 2, as in the previous case of the Multi-layer Full-Mesh. For this purpose, let us consider that there might be links from a leaf router to another leaf router\footnote{links between spines are possible only for diameter $k \geq 3$.}. Therefore, let $\delta$ denote the number of links from a leaf router to another leaf router, which is again assumed to be constant.
Note that $\delta=\Delta$ in direct topologies and $\delta=0$ in fully indirect topologies, but there are some intermediate topologies. %see DCell
Now, since the maximum distance between leaf routers is 2, every of the $R$ links in a spine router must go to leaf routers. Thus, counting the links between leaf routers and spine routers it is obtained the following expression
	$$L(\Delta-\delta)=SR.$$

Now, the maximum number of leafs in a graph with maximum distance between leafs being 2, can be expressed in terms of $(\delta, \Delta, R)$ as follows:
\begin{equation}\label{eq:moore_indirect}
	L\leq 1+\delta^2+(\Delta-\delta)(R-1),
\end{equation}

Note that this is a Moore bound calculation but only considering leaf vertices. Also, if $\delta=\Delta$ then it becomes the original Moore bound $M(\Delta,2)$ presented in Equation~\eqref{eq:moore}.

The optimal value for the number of compute nodes is obtained when $$\Delta_0=\frac{u}{\bar k}(2\Delta-\delta),$$ which generalizes Equation~\eqref{eq:nodes}. Now, the cost per compute node is, analogously as it was done in Equation~\eqref{eq:cost},
$$\frac{\#\text{ports}}{\#\text{compute nodes}}=\frac{NR}{L\Delta_0}
	=\frac{R+\Delta-\delta}{\Delta_0}=1+\frac{\bar k}{u}.$$

This surprisingly implies that the cost per node does not depend on $\delta$. Hence, the most interesting value for $\delta$ would be the one giving the best scalability, since it provides the maximum number of compute nodes for the same cost. The maximum for Equation~\eqref{eq:moore_indirect} is obtained when $\delta=0$, which is the typical situation in indirect networks. That is, $$L\leq 1+\Delta(R-1).$$ There already exists a topology called \textsl{ Orthogonal Fat Tree} (OFT) presented in \cite{Valerio} that asymptotically attains this bound for $\bar k= 2$. This was already experimentally proved in \cite{Kathareios}. Next, a different construction than the one given in that work is presented, illustrating how also OFTs can be obtained from projective finite planes.

OFTs were constructed in \cite{Valerio} using orthogonal Latin squares. As the author already remarked in that paper, there is a intimate relation between orthogonal Latin squares and finite projective planes. That is, there are $n-1$ mutually orthogonal $n$-by-$n$ Latin squares if and only if there is a finite projective plane of order $n$~\cite{Colbourn}. Therefore, in the following definition, OFTs are built directly using projective spaces instead of manipulating mutually orthogonal Latin squares.

\begin{definition} Let $q$ be a power of a prime number. Let $\hat G_q = (V, E)$ be the graph with vertex set
$$V=\{ (s,P) \mid s\in\{0,1,2\},\ P\in P_2(\mathbb F_q) \}$$
and edge set
	$$E=\bigl\{ \{(0,P),(1,L)\},\{(1,P),(2,L)\} \mid P\perp L \bigr\}.$$
Thus, $\hat G_q$ is said to be the \textsl{orthogonal fat tree} of $P_2(\mathbb F_q)$.
\end{definition}

In a OFT network, vertices $(1,P)$ correspond to spine routers and the rest to leaf routers. As an example, let us consider Figure~\ref{fig:OFT}. In this figure black circles represent routers and white circles compute nodes.
As it can be seen, the routers are displayed into three columns of $q^2+q+1 = 7$ routers, since the total number of routers is $N=3(q^2+q+1) = 21$. The column in the middle would correspond to spine routers and the other two to leaf routers. It can also be seen that $\Delta=\Delta_0=q+1$ and $T=2(q+1)(q^2+q+1)$. Indirect networks are no longer vertex-transitive since there exist two different kind of vertices (spine and leaf). However, OFT is edge-transitive, so the utilization is exactly $u=1$. The average distance between leafs is exactly $\bar k=2$, since for any two leafs the minimal path connecting them is of length 2. Note that for each leaf there are several spine routers at distance 3. Finally, it is worthwhile to note that two $G_q$ projective networks are embedded in any $\hat{G}_q$, thus connecting these two different topologies. Moreover, it can be seen that this network has the same cost than the demi-PN and almost the same scalability of the PN, since $T_{PN} = 0.29 R^3$ and $T_{OFT}= 0.25 R^3$.

Finally, let us consider two different cases of use similar to the ones developed in subsection \ref{subsec:cases_of_use} but for indirect networks. Table \ref{table:OFT_case_of_use} presents the cost and power per node for OFT and MLFM networks with sizes about 10000 and 25000 computed nodes. A typical layout of indirect networks is done without electrical groups, which implies that every cable has been considered to be optical for the calculations. The MLFM results are similar to the demi-PN with slightly higher power. With respect to the OFT, on the one hand its scalability is slightly lower than PN, since with a slightly greater radix router it connects almost the same number of terminals. On the other hand, OFT has the same cost and power per node than the demi-PN.

\begin{table}
	\begin{center}\scriptsize
	\begin{tabular}{|c|cc|cc|}
	\hline
	Topology		&	MLFM 22			&	MLFM 30	&	OFT 16	&	OFT 23\\
	\hline
	T				&	9702			&	25230	&	9282	&	26544\\
	\bf R			&	\bf 42			&	\bf 58	&	\bf 34	&	\bf 48\\
	N				&	693				&	1305	&	819		&	1659\\
	$\Delta_0$		&	21				&	29		&	17		&	24\\
	cables			&	9702			&	25230	&	9282	&	26544\\
	\hline
	\bf Cost per node	&\bf 1297.18	&\bf 1321.76	& \bf 1282.19	&	\bf 1312.14\\
	\bf Watts per node	&\bf 8.4		&\bf 8.4		& \bf 8.4		&	\bf 8.4\\
	\hline
	\end{tabular}
	\end{center}
	\caption{Example Multi-Layer Full-Mesh and OFT networks with about 10,000 and 25,000 compute nodes.}
	\label{table:OFT_case_of_use}
\end{table}

\begin{figure}
	\begin{center}
	\begin{tikzpicture}
	\foreach \a in {0,1}
	\foreach \b in {0,1}
	\foreach \c in {0,1}
	{
		%\pgfmathtruncatemacro\cond{\a=0 and \b=0 and \c=0}
		\pgfmathtruncatemacro\heigh{\a*4+\b*2+\c}
		\ifthenelse{\heigh=0}{}
		{
			\path[fill] (0,\heigh) node[anchor=south] {(0,(\a,\b,\c))} circle (2pt) coordinate (left \a \b \c)
				++(2,0) coordinate (center \a \b \c) circle (2pt) node [anchor=south] {(1,(\a,\b,\c))}
				++(2,0) coordinate (right \a \b \c) circle (2pt) node [anchor=south] {(2,(\a,\b,\c))}
				;
			\foreach \angle in {-10,0,10}
			{
				\draw (left \a \b \c) -- +(180+\angle:.75) circle (2pt);
				\draw (right \a \b \c) -- +(\angle:.75) circle (2pt);
			}
		}
	}
	%\draw (0,8) node[anchor=east] {points} ++(1,0) node [anchor=west] {lines};
	\foreach \la in {0,1}
	\foreach \lb in {0,1}
	\foreach \lc in {0,1}
	\foreach \ra in {0,1}
	\foreach \rb in {0,1}
	\foreach \rc in {0,1}
	{
		\pgfmathtruncatemacro\goodl{\la==1 || \lb==1 || \lc==1}
		\pgfmathtruncatemacro\goodr{\ra==1 || \rb==1 || \rc==1}
		\pgfmathtruncatemacro\dot{mod(\la*\ra + \lb*\rb +\lc*\rc,2)}
		\pgfmathtruncatemacro\good{\goodl && \goodr && \dot==0}
		\ifthenelse{\good=1}
		{
			\draw (left \la \lb \lc) -- (center \ra \rb \rc) -- (right \la \lb \lc);
		}
	}
	\end{tikzpicture}
	\end{center}
	\caption{Orthogonal Fat Tree $\hat G _2$}
	\label{fig:OFT}
\end{figure}
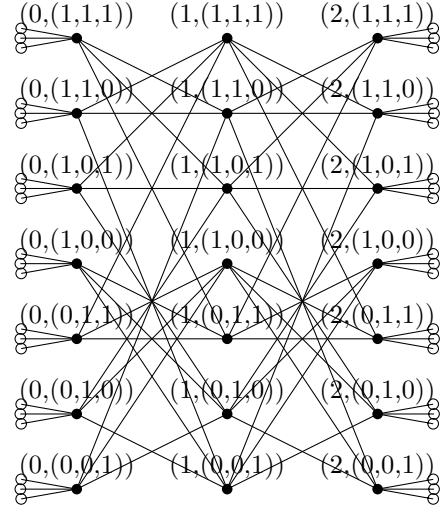

\section{Conclusions}\label{sec:conclusions}

Projective networks have been proposed in this paper for large systems using direct networks. These networks are built using incidence graphs of projective planes. Our proposal has been done by means of a coarse-grain cost model based on minimizing the average distance of the network while maintaining a uniform link utilization. The optimal networks under this cost model are those generalized Moore graphs which have uniform link utilization and, in particular, those being symmetric. By a complete a study of all the actually known families of generalized Moore graphs, for a given radix router and a number of compute nodes it is possible to choose the optimal network, using this cost model. In particular, projective networks have been proved to be a feasible alternative to the recently proposed Slim Fly. Finally, a first approach to the indirect networks' case has been considered. Our cost model has been adapted to this situation only for diameter two networks, since a general model for any diameter seems unfeasible. As it has been shown, optimal indirect networks for this case are the two-level Orthogonal Fat Trees, which can be also obtained by means of incidence graphs of projective planes.

%Other unnecessary figures
%\input{other}

%\begin{thebibliography}{99}
%\bibitem{ISIT10} C. Mart\'{\i}nez, C. Camarero, R. Beivide ``Perfect Graph Codes over Two Dimensional Lattices". Accepted for publication at 2010 IEEE Internations Symposium on Information Theory.
%\end{thebibliography}
\bibliographystyle{IEEEtranS}
\bibliography{main}

\end{document}